\title{A game-theoretic mechanism for aggregation and dispersal of interacting populations}
\author{Russ deForest\thanks{email:
    \href{mailto:russ.f.deforest@gmail.com}{russ.f.deforest@gmail.com},
    corresponding author} }
\author{Andrew Belmonte\thanks{email: \href{mailto:andrew.belmonte@gmail.com}{andrew.belmonte@gmail.com}}}%
\affil{Department of Mathematics  \\ Pennsylvania State University,\\
    University Park, PA 16802 USA}
\pgfplotsset{compat=newest}
\newlength\figwidth
\newtheorem{lem}{Lemma}
\newtheorem{thm}{Theorem}
\newtheorem{cor}{Corollary}
\theoremstyle{definition}
\newtheorem{defn}{Definition}
\newtheorem*{notation}{Notation}
\theoremstyle{remark}
\newtheorem*{remark}{Remark}
\numberwithin{equation}{section}
\newcommand{\abs}[1]{\left\lvert #1 \right\rvert}
\newcommand{\norm}[1]{\left\lVert #1 \right\rVert}
\newcommand{\R}{\mathbb{R}}
\newcommand{\dfit}{\delta \! f}
\newcommand{\pin}{\text{ in }}
\newcommand{\pon}{\text{ on }}
\newcommand{\real}{\textrm{Re }}
\newcommand{\bu}{\ensuremath{\mathbf{u}}}
\newcommand{\bv}{\ensuremath{\mathbf{v}}}
\newcommand{\f}{\ensuremath{\mathbf{f}}}
\newcommand{\Rd}{\ensuremath{\mathbb{R}^d}}
\begin{document}
\maketitle

\begin{abstract}
  We adapt a fitness function from evolutionary game theory as a
  mechanism for aggregation and dispersal in a partial differential equation (PDE) model
  of two interacting populations, described by density functions $u$
  and $v$.  We consider a spatial model where individuals migrate up
  local fitness gradients, seeking out locations where their given
  traits are more advantageous.   The resulting system of fitness gradient
  equations is a degenerate system having spatially
  structured, smooth, steady state solutions
  characterized by constant fitness throughout the domain.  When populations are viewed as
  predator and prey, our model captures prey aggregation behavior
  consistent with Hamilton's selfish herd hypothesis.
  We also
  present weak steady state solutions in 1d that are continuous but in
  general not smooth everywhere, with an associated fitness that is
  discontinuous, piecewise constant. We give numerical examples of 
  solutions that evolve toward such weak steady
  states.  We also give an example of a spatial Lotka--Volterra model,
  where a fitness gradient flux creates 
  instabilities that lead to spatially structured steady states.  Our
  results also suggest that when fitness has some dependence on local
  interactions, a fitness-based dispersal mechanism may act to create
  spatial variation across a habitat.

  \textbf{keywords:} dispersal, aggregation, fitness gradient,
    degenerate diffusion, quasilinear pde, cross-diffusive
    instability, evolutionary game dynamics, migration.

    \textbf{MSC: Primary 92D25; Secondary 35K65}

\end{abstract}

\section{Introduction}
\label{sec:intro}

An interesting problem in ecology is understanding
the aggregation behavior seen in some prey species in the presence of
predators. In some settings, such as bait balls of mackerel
in the open ocean, aggregation provides an easy target for large
predators even as the behavior diffuses the risk to each individual~\cite{Parrish2002}.
  Thus, aggregation can be viewed as a cooperative
  behavior.

In his 1971 paper, \emph{Geometry of the Selfish
  Herd}, Hamilton hypothesized that the aggregation behavior in a prey
species could arise
from the selection pressure of predators~\cite{Hamilton1971}.  Under
the \emph{selfish herd hypothesis} each animal seeks to minimize its  individual domain of danger in the
presence of a predator that may appear at a random location.
Hamilton proposed a nearest neighbor rule; by moving in the direction
of the
nearest neighbor, prey animals reduce their individual domains of
danger. There are two interesting shortcomings of this assumption,
noted by Hamilton in his original paper.  First, the rule tends to produce 
small isolated clusters instead of large aggregations. Another is that
an animal may temporarily increase
its domain of danger in its approach to its nearest neighbor.  A variety of movement rules have followed.  A review paper by
Morrell and James summarizes movement rules that have been considered
and analyzes their success in capturing aggregation behavior in
various settings~\cite{Morrell2007}.

In this paper we adapt a fitness function from evolutionary game
theory as a mechanism for aggregation in a predator-prey model. The
model seeks to capture transitory dynamics of the interacting populations as
each locates itself on the landscape relative to the
other. We have in mind aggregation phenomena such as bait balls
where species of prey fish densely pack themselves
together in the presence of predators~\cite{Parrish2002}.

Our modeling assumptions will be shown to encode 
assumptions of Hamilton's \emph{selfish herd hypothesis};  prey tend to
aggregate so as to maximize their population density relative to the
population of predators and the predators follow.  Steady states are
characterized by the condition where the relative frequency of the
two populations is constant.  Our model can also be viewed as capturing
the spatial dynamics of a public goods type game, where both a
cooperating and defecting population each increase their fitness by
locating themselves in regions where the relative frequency of
cooperators is higher.

As is standard in evolutionary games, the fitness in our model depends on the
relative frequency of each population~\cite{Hofbauer1998}.  However, we are
not modeling
selection dynamics among competing traits or strategies.  Instead we
model the spatial dynamics as each population tends to move up its local
fitness gradient. The resulting model is a degenerate quasilinear
system of partial differential equations which we refer to as a
\emph{fitness gradient flux system} of partial differential equations (PDE).   The most interesting
feature  of the model is the presence of a negative density-dependent
diffusion coefficient for the prey population.  Naturally, this is the
feature in the model
that gives rise to the aggregation phenomenon.  The prey aggregation
is moderated by the predator population which ``chases'' the prey.
We show that a
perturbation of the model yields a  \emph{normally parabolic} system,
having smooth solutions.  Thus, despite the negative
self-diffusion coefficient, the degeneracy should be viewed as a
limiting case of a well-behaved system.  We also note that the mechanism
for aggregation differs from that used in the Keller-Segel chemotaxis
model  where aggregation follows
a chemical gradient and is moderated by
self-diffusion~\cite{Childress1981, Horstmann2003}.  

We present results on steady state solutions and on a linearization
around these steady states. We then discuss a 
Lotka--Volterra predator-prey model spatially extended via the
fitness-gradient flux.

\subsection{The model}
\label{sec:model}

We consider
here 
a spatial model of two populations without selection, driven by
migration only in
the direction of increasing fitness, resulting from a \emph{fitness gradient flux}. This
flux arises naturally from the effort of individuals within each
population as they seek out, locally, positions of greater
advantage. The dynamics of the population densities can be modeled by  the following partial differential equations (PDEs) in a bounded domain $\Omega\subset \R^n$, with a no-flux boundary condition and strictly positive initial conditions
\begin{equation}   \label{eq:introPDE}     
   \begin{cases}
  \partial_t u = -\beta_1
  \nabla\cdot\left(u\nabla f_1\right), & \text{in } \Omega \times (0,T),\\
  \partial_t v  = -\beta_2 
  \nabla\cdot\left(v\nabla f_2\right), & \text{in } \Omega \times (0,T),\\
  \nu\cdot(u\nabla f_1) = 0, \quad \nu\cdot(v\nabla f_2) = 0,&  \text{on } \partial\Omega \times (0,T),\\
  u(\mathbf{x},0) = u_0(\mathbf{x})>0, & \text{in } \Omega,\\
  v(\mathbf{x},0) = v_0(\mathbf{x})>0, & \text{in } \Omega,
     \end{cases}
\end{equation} 
where $f_i$ describe the fitness for each population, the
$\beta_i$ are constants determining each population's sensitivity to its
fitness gradient, and $\nu$
is the outer unit normal to $\partial\Omega$. These equations, first
presented in~\cite{deforest2013}, describe
population migration in the direction of increasing fitness.  The resulting system can be viewed
as a generalized diffusion system, where there are cross-diffusion
effects (see Section~\ref{sec:recast}).

It is interesting to contrast a steady state solution of \eqref{eq:introPDE}
with an ideal free distribution.  In an
ideal free distribution a population is allocated to the available
habitat in an optimal way.  Fitness depends on local environmental
conditions and is assumed to be a decreasing function of the local
population density.   Constant fitness is a characteristic of ideal free distributions, since if fitness were not a constant function of
space, some individuals could relocate to more favorable habitat,
improving their fitness \cite{Fretwell1970, Cressman2006, Cosner2005}.  

As we show below in Section~\ref{sec:analysis}, for strictly positive steady state solutions  of \eqref{eq:introPDE} the fitness functions
$f_i$ are constant throughout $\Omega$.    However, our results differ
from an ideal free distribution in several ways.  In our model, an
individual's fitness is the expected value
of an interaction with another individual occupying the same local
area and  depends only on the ratio of the local population
densities, consistent with an evolutionary game.   As such, there are many
possible steady states giving  the same constant values for the fitness functions,
$f_i$.  Thus while populations at a steady state are optimally
distributed, there is no dependence on the background environmental
conditions, which are assumed to be uniform throughout the domain. The 
spatial structure of a particular steady state instead results from variations in the
ratio of population densities throughout the domain at some initial
time.  
 This suggests that when fitness has some dependence on intraspecies and
 interspecies interactions, fitness-based dispersal may act as a source
 of variation across a habitat.

In Section~\ref{sec:ode} we treat a simplified version of \eqref{eq:introPDE} on two nodes.  We derive this equation from a continuum limit argument in
Section~\ref{sec:continuum-limit}.  Our main results appear in
Section~\ref{sec:analysis}, where we discuss steady state solutions, weak
steady state solutions, and show that smooth steady state solutions
are unstable.  Numerical examples are
discussed in  Section~\ref{sec:numerics}. In the remainder of this section we provide some background
on  PDE models of interacting populations.

\subsection{Background on PDE Population Models}
\label{sec:background}

Although the use of diffusion in a PDE model of population dynamics originated with
Fisher~\cite{Fisher1937,Bacaer2011}, Skellam is credited as the first
systematic treatment of diffusion in modeling the spread of biological
populations~\cite{Skellam1951,Aronson1985,Okubo2001,Cantrell2003}. He further suggested that
such models must account for attractive and repulsive forces that arise
from animal behavior \cite{Skellam1973}.  Okubo extended
Skellam's work along these lines by allowing a transition
probability in a  biased random walk to depend either on local
conditions at a present node, or conditions at neighboring nodes and
at intermediate locations~\cite{Okubo1986}.

The first use of fitness-based migration in a  PDE model of biological
populations seems to be by Shigesada, Kawasaki, and Teramoto,  whose
work (now called the SKT model)
formalized Morisita's theory of environmental density~\cite{Shigesada1979}.  This theory, based on Morisita's 
experimental work with antlions and observations of other species,
assumes that the suitability of a given habitat declines with an
increase in population density and can be thought of as a precursor
to the assumptions used in
define an ideal free distribution \cite{Morisita1971,Rosenzweig1985}.
The SKT model  includes ``the attractive force which induces directed movements of individuals
toward favorable places'', as well as   random movements
(diffusion) and a 
nonlinear dispersive force due to population pressure. Their model demonstrates that dispersal due to population
pressure can reduce interspecific competition by leading
competitors to segregate spatially.

More recently, Cosner and Cantrell have used a fitness gradient flux in dynamic
models whose steady state solutions approximate ideal free
distributions~\cite{Cosner2005, Cantrell2008}. These are reaction-advection-diffusion models where the
advective term represents directed movement up a local fitness
gradient. The fitness is defined to be a local rate of reproduction and
is a decreasing function of the local population density. The key
result is that such a local dispersal mechanism can lead to an ideal
free distribution.  An extension to a two-species competition model
has been used to show that a species adopting a fitness-based
dispersal cannot be invaded by a competitor using only random
dispersal~\cite{Cantrell2013}.

A model of ideally-motivated
competitors was investigated in~\cite{Rowell2010},
demonstrating conditions for coexistence, spatial segregation, and
competitive exclusion.
A more recent paper by Cosner gives a thorough review of
the use of reaction-diffusion-advection models in studying both the
effects and evolution of  dispersal as well as providing background on
relevant analytical techniques~\cite{Cosner2014}. 

The Keller-Segel chemotaxis model is a well studied system modeling
aggregation. See the review in~\cite{Horstmann2003}. More recent work in
aggregation-diffusion equations has been focused degenerate on diffusion
and on aggregation with
nonlocal effects incorporated via convolution with a
potential~\cite{Bedrossian2011, Bertozzi2010}. An interesting
model in~\cite{Laurent2007} uses convolution with a smooth potential to regularize a density
dependent backward heat equation.

\subsection{Background on Normally Parabolic Reaction Diffusion
  Systems}

Our main results concern a
degenerate system of equations with a negative, density dependent
self-diffusion coefficient for one of the populations. Strong
solutions for an approximation to this  system, regularized by additional
diffusion terms were  shown to exist in~\cite{Xu2017}.

Under a different regularization, this
system may also be viewed as a limiting case of a normally parabolic
reaction diffusion system. The theory of quasilinear normally
parabolic systems is developed in a series of papers by
Amann~\cite{Amann1988b,AmannQ2, Amann1989b}. Such systems feature
spatial operators that are normally elliptic, but in general are not strongly
elliptic and they capture the smoothing property associated with the
heat equation.  That is to say such operators are generators of analytic semigroups
which can be used to represent solutions in an appropriate function
space. We state here a general, local in time existence result due to
Amann for
normally parabolic reaction-diffusion systems having a no-flux
boundary. In the sequel we show that a regularization of our model
results in such a system.  While in this model we focus on accessible features of the
degenerate system, we are interested in further study of normally
parabolic reaction-diffusion systems that retain key features of the
present model.

Let $\Omega$ be an open, bounded, and connected domain in $\R^n$ with  $C^2$ boundary
$\partial \Omega$.  We consider a system of PDEs acting on
real-valued functions
$\mathbf{u} = (u_1,\dots,u_d)$ given by
\begin{equation}
  \label{eq:indexPDE}
  \begin{aligned}
    \partial_t u_k &= \nabla \cdot (b_{k1}\nabla u_1 + \cdots +
    b_{kd}\nabla u_d) + f_k(\mathbf{u}), \quad \text{ for } k=1,\dots,d,\\
  \end{aligned}
\end{equation}
satisfying the no-flux boundary condition
\begin{equation}\label{eq:indexBC}
  \sum_{k=1}^d b_{kj}\nu \cdot\nabla u_j = 0,
\end{equation}
where $\nu$ is the outer unit normal to $\Omega$.  
Define
\begin{equation*}
  G =: \{\xi\in \R^d : \sigma(B(\xi)) \subset [\real z > 0]\}.
\end{equation*}
We take the coefficient functions $b_{kj}(\cdot)$ and the ``reaction''
functions $f_k(\cdot)$ to be smooth maps from $G$ to $\R$:
\begin{equation}\label{eq:coeff_condition}
b_{kj}\in C^\infty(G),\quad f_k\in C^\infty(G).
\end{equation}
\begin{remark}
  The set $G$ is open in $\Rd$. For the PDE we consider we will want
  \begin{equation*}
    R_{\geq 0}^d = \{x\in \Rd: x_i\geq 0 \text{ for all } i\} \subset G,
  \end{equation*}
or at least
\begin{equation*}
  R_{>0}^d = \{x\in \Rd: x_i>0 \text{ for all } i\} \subset G.
\end{equation*}

\end{remark}

Expressing the spatial differential operator in \eqref{eq:indexPDE} in
the form 
\begin{equation*}
  \mathcal{B}(\mathbf{u})\mathbf{u} = \nabla \cdot
  \left[B(\mathbf{u})\nabla \mathbf{u}\right],
\end{equation*}

and the boundary operator in \eqref{eq:indexBC} in the form
\begin{equation*}
  \mathcal{C}(\mathbf{u})\mathbf{u} = 0, 
\end{equation*}
we can rewrite our PDE in the form
\begin{equation}
  \label{eq:2}
  \begin{aligned}
    \partial_t \mathbf{u} = \mathcal{B}(\mathbf{u})\mathbf{u} +
    \mathbf{f}(\mathbf{u}),\\
    \mathcal{C}(\mathbf{u})\mathbf{u}= 0
  \end{aligned}
\end{equation}
Finally, let $V$ denote the space of $\Rd$-valued functions in $W^{1,p}(\Omega)$
that take values in $G$:
\begin{equation*}
V :=\left\{\bv \in W^{1,p}(\Omega): \bv(\overline{\Omega})\subset G\right\}.
\end{equation*}

We use the following local existence theorem \cite[pg. 17]{AmannQ2}.

\begin{thm}[Amann Normally Parabolic Local Existence]\label{thm:amann-local-existence}

For any $\bu_0\in V$, the PDE given by
  \begin{equation}
    \label{eq:AbstractPDE}
    \begin{cases}
      \partial_t \mathbf{u} = B(\mathbf{u})\mathbf{u} + \f(\mathbf{u}), & \pin \Omega \times
      (0,T),\\
      C(\mathbf{u}))\mathbf{u} = 0, & \pon \partial \Omega \times (0,T),\\
      \mathbf{u}(\cdot,0) = \mathbf{u}_0, & \pon \Omega,
    \end{cases}
  \end{equation}
satisfying (assumptions above) has a unique maximal solution,
\begin{equation}
  \label{eq:NP-soln}
  \bu(\cdot) \in C([0,t_f))\cap C^\infty(\overline{\Omega}\times (0,t_f),
  \R^d).
\end{equation}
The  map $t\mapsto \bu(t)$ defines a smooth semiflow on
$V$, in the
$W^{1,p}(\Omega)$ sense.  Furthermore, if $\bu(t)$ is a bounded orbit that is also bounded
away from the boundary $\partial V$ then $\bu(t)$ is relatively compact in $V$ and
for $t>0$ is 
also bounded in $W^{2,p}(\Omega)$. 
\end{thm}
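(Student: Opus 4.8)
The plan is to recognize that this statement is precisely Amann's quasilinear parabolic existence theory specialized to the conormal (no-flux) setting, so a ``proof'' amounts to checking that the hypotheses above fit that framework and then invoking it. First I would recast \eqref{eq:AbstractPDE} as an abstract quasilinear Cauchy problem $\dot{\mathbf u} = \mathcal A(\mathbf u)\mathbf u + \mathbf f(\mathbf u)$, $\mathbf u(0)=\mathbf u_0$, on the base space $E_0 := L^p(\Omega,\R^d)$, where $\mathcal A(\mathbf v)$ is the realization of $-\nabla\cdot(B(\mathbf v)\nabla\,\cdot\,)$ subject to the conormal condition $\mathcal C(\mathbf v)\,\cdot\,=0$. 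The condition $\sigma(B(\xi))\subset[\real z>0]$ for $\xi$ in the (compact) range of $\mathbf u_0$ is the normal ellipticity hypothesis, which is exactly what is needed to conclude that this realization generates an analytic semigroup on $E_0$, and, via Amann's interpolation--extrapolation machinery, on the whole scale $(E_\alpha)$ built from $(E_0,E_1)$ with $E_1$ the domain of $\mathcal A$.

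Second, I would fix the working couple. With $p>n$ one has $W^{1,p}(\Omega)\hookrightarrow C(\overline\Omega)$, so $V$ is open in $W^{1,p}(\Omega)$, and by \eqref{eq:coeff_condition} together with the chain rule the superposition maps $\mathbf v\mapsto B(\mathbf v)$ and $\mathbf v\mapsto\mathbf f(\mathbf v)$ are smooth (hence locally Lipschitz, with locally Lipschitz derivatives) from $V$ into the appropriate operator and function spaces. Choosing an index $\beta$ so that $W^{1,p}(\Omega)$ lies between $E_\beta$ and $E_{\beta+1}$, the abstract theorem for quasilinear parabolic problems -- continuity of $\mathbf v\mapsto\mathcal A(\mathbf v)$ into the generators of analytic semigroups plus local Lipschitz continuity of $\mathbf f$ -- yields a unique maximal solution $\mathbf u\in C([0,t_f),V)\cap C^1((0,t_f),E_\beta)$. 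Parabolic smoothing on the scale, obtained by repeatedly differentiating the equation in $t$ and bootstrapping in the spatial variables, upgrades this to $\mathbf u\in C^\infty(\overline\Omega\times(0,t_f),\R^d)$, and uniqueness together with joint continuity of the solution map in $(t,\mathbf u_0)$ gives the smooth semiflow property in the $W^{1,p}$ topology.

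Finally, for the continuation and compactness claim I would invoke the standard alternative: a maximal solution either is global or eventually leaves every compact subset of $V$. Suppose the orbit $\{\mathbf u(t):t\geq t_0\}$ is bounded in $W^{1,p}(\Omega)$ and stays a fixed positive distance from $\partial V$; then its pointwise closure is a compact subset of $G$, so $B(\mathbf u(t))$ is \emph{uniformly} normally elliptic along the orbit. The analytic-semigroup smoothing estimates then bound $\mathbf u(t)$, for $t\geq t_0+\delta$, in the domain of the frozen-coefficient elliptic operator, which embeds continuously into $W^{2,p}(\Omega)$; the compact embedding $W^{2,p}(\Omega)\hookrightarrow\hookrightarrow W^{1,p}(\Omega)$ then delivers relative compactness of the orbit in $V$.

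The genuine obstacle lies upstream, not in this paper: the hard content is Amann's construction of the interpolation--extrapolation scales and the proof that \emph{normally} elliptic (not necessarily strongly elliptic) boundary value problems generate analytic semigroups on them. Since $B$ here is permitted to have a negative diagonal entry, the usual G\aa rding-inequality / strong-ellipticity route is unavailable, and it is precisely the spectral condition defining $G$ that must be exploited. Granting that theory, the steps above are routine checks that our coefficients satisfy the regularity and ellipticity hypotheses.
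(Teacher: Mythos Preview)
Your proposal is a reasonable sketch of how Amann's machinery actually works, but you should be aware that the paper does not prove this theorem at all: it is stated as a quotation, with the attribution ``We use the following local existence theorem \cite[pg.~17]{AmannQ2}.'' So there is no ``paper's own proof'' to compare against; the authors simply invoke Amann's result as a black box and then apply it (in Theorem~\ref{thm:npffpde1} and its corollary) by checking that the eigenvalues of $B_\epsilon(u,v)$ lie in the open right half-plane.

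In that light, what you have written is considerably more than the paper offers. Your outline---recasting the problem as an abstract quasilinear Cauchy problem on $L^p$, exploiting $p>n$ so that $W^{1,p}\hookrightarrow C(\overline\Omega)$ to make $V$ open and the superposition operators smooth, using normal ellipticity to get analytic semigroup generation on the interpolation--extrapolation scale, and finishing with the continuation alternative plus the compact embedding $W^{2,p}\hookrightarrow\hookrightarrow W^{1,p}$---is an accurate high-level summary of the argument in \cite{AmannQ2}. You are also right that the real work is upstream: the spectral condition defining $G$ is exactly what replaces strong ellipticity, and that is what makes the theory applicable when $B$ has negative diagonal entries. For the purposes of this paper, though, none of that needs to be reproduced; the theorem is cited, not proved.
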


\begin{remark}
  In general, $t_f$ depends on the initial condition $\bu_0$. If
  $\bu(t)$ remains bounded in $W^{1,p}(\Omega)$ and bounded away from
  $\partial V$ then we may take $t_f = \infty$; existence is
  global. 
\end{remark}




\section{Recasting the Model  as a Generalized
  Diffusion System } 
 \label{sec:recast}

Here we calculate fitness gradients based on fitness functions
from an evolutionary game
for two populations and recast \eqref{eq:introPDE}
as a generalized degenerate diffusion system.  We also demonstrate
a regularizing perturbation that results in a normally parabolic
quasilinear system.

We first consider the following system, first presented
in~\cite{deforest2013}, where we've denoted the
fitness for $u$ as $f_1$; the fitness for $v$ is denoted by $f_2$:
\begin{equation}   \label{eq:FFPDE2}   
\begin{cases}  
\partial_t u = -\beta_1\nabla \cdot (u \nabla f_1), &  \pin \Omega
\times (0,T)\\
  \partial_t v  = -  \beta_2\nabla\cdot\left(v\nabla f_2\right), & \pin \Omega \times (0,T),\\
     \end{cases}
   \end{equation}
   The constants $\beta_i$ denote each population's responsiveness or
   sensitivity to its individual fitness gradient.  By letting
   $\beta=\frac{\beta_2}{\beta_1}$ and rescaling time, we may re-write
   \eqref{eq:FFPDE2} as
 \begin{equation}   \label{eq:FFPDE3}   
\begin{cases}  
\partial_t u = -\nabla \cdot (u \nabla f_1), &  \pin \Omega
\times (0,T)\\
  \partial_t v  = -\beta  \nabla\cdot\left(v\nabla f_2\right), & \pin \Omega \times (0,T),\\
     \end{cases}
\end{equation}

Now let $A$ denote a two strategy
symmetric game matrix,
\begin{equation*}
  A =
  \begin{bmatrix}
    a_{11} & a_{12}\\a_{21}& a_{22}
  \end{bmatrix}.
\end{equation*}
The fitness functions for $u$ and $v$ are
\begin{equation}\label{eq:fitness}
  f_1(u,v) = \frac{a_{11}u + a_{12}v}{u+v}, \quad f_2(u,v) =
  \frac{a_{21}u + a_{22}v}{u+v}.
\end{equation}
This definition of
fitness, which we base on Taylor and Jonker~\cite{Taylor1978}, is standard in
the evolutionary game literature (see also~\cite{Vickers1989},
\cite{Demetrius2000}, \cite[Ch. 7]{Easley2010},\cite{Hofbauer1998}).

This leads to
\begin{equation}
  \label{eq:fitness-gradient}
  \begin{aligned}
    \nabla f_1(u,v) &= \frac{(a_{11}-a_{12})}{(u+v)^2}\left(v\nabla u -
    u\nabla v\right)\\
\nabla f_2(u,v) & = \frac{(a_{21}-a_{22})}{(u+v)^2} \left(v\nabla u -
  u\nabla v\right).
  \end{aligned}
\end{equation}
Noting that the fitness gradients are proportional, we define a
constant depending on the matrix $A$,
\begin{equation*}
  \kappa_A = \frac{(a_{21}-a_{22})}{(a_{11}-a_{12})}, 
\end{equation*}
so that
\begin{equation*}
  \nabla f_2 = \kappa_A \nabla f_1.
\end{equation*}
As we have done in \eqref{eq:introPDE}, we will usually denote the
fitness function for $u$ as $f(u,v)$. The corresponding fitness gradient for
$v$ is then
\begin{equation*}
  \kappa_A\nabla f. 
\end{equation*}
By defining a parameter $\gamma>0$ such that
\begin{equation*}
  (1+\gamma) = \kappa_A,
\end{equation*}
we arrive at the following  PDE system.
\begin{equation}   \label{eq:FFPDE1}   
\begin{cases}  
\partial_t u = -\nabla \cdot (u \nabla f), &  \pin \Omega
\times (0,T)\\
  \partial_t v  = -(1+\gamma)   \nabla\cdot\left(v\nabla f\right), & \pin \Omega \times (0,T),\\
  \nu\cdot(u\nabla f) = \nu\cdot(v\nabla f) = 0,&  \text{on } \partial\Omega \times (0,T),\\
  u(x,0) = u_0(x)>0, & \text{in } \Omega,\\
  v(x,0) = v_0(x)>0, & \text{in } \Omega.
     \end{cases}
\end{equation} 

We normalize the
game matrix, making the assumption that $a_{11}-a_{12}=1$; we also
assume that $\gamma>0$ which requires that $a_{21}-a_{22}>1$. Thus $v$ is more responsive to $\nabla f$ than the
population $u$. 

\begin{remark}
  This game dynamic is one where players of type $u$ do best
against their own type, while individuals of type $v$ do  better
against type $u$ than against others of their own type.  This is the case in the
classical prisoner's dilemma and hawk-dove games.  As discussed in the
introduction, a
similar dynamic exists in  predator-prey systems where prey aggregate
and by so doing, reduce their individual risk of predation 
while predators benefit by locating
themselves where prey is highly concentrated \cite{Hamilton1971, Ruxton2006,Wrona1991}. 
We are not considering
 Nash equilibria of the game given by $A$; our interest here is only in how the
game matrix $A$, though the parameter
$\gamma$, affects the movement and distribution of the populations.  

\end{remark}


From \eqref{eq:fitness-gradient} with
$a_{11}-a_{12} = 1$ we have
\begin{equation*}
  \begin{aligned}
    -u\nabla f &= \frac{1}{(u+v)^2}\left(-uv\nabla u + u^2\nabla
      v\right)\\
  -(1+\gamma) \nabla f & = \frac{(1+\gamma)}{(u+v)^2}\left(-v^2\nabla
    v + uv \nabla v\right)
  \end{aligned}
\end{equation*}
It is sometimes convenient to express \eqref{eq:FFPDE1} in the
vector form
\begin{equation}\label{vector-form}
  \partial_t \mathbf{w} = \nabla \cdot \left(b(\mathbf{w}) \nabla \mathbf{w}\right),
\end{equation}
where $\mathbf{w} = (u,v)$ and $\nabla \mathbf{w} = (\nabla u, \nabla
v)$, where $B(\cdot)$ a matrix of density dependent diffusion coefficients with
\begin{equation}
  \label{eq:mB1}
  B(u,v) = \frac{1}{(u+v)^2}
  \begin{bmatrix}
    -uv & u^2\\ -(1+\gamma)v^2 & (1+\gamma)uv
  \end{bmatrix}.
\end{equation}

\begin{remark}
 It is now clear  how our choice of fitness functions for the two populations
encodes the assumption that both the prey population $u$ and the
predator population $v$ tend toward regions where the population $u$
has higher density.  The term $-uv \nabla u$ in the flux for $u$
drives the prey aggregation, while the term $u^2 \nabla v$ indicates
that prey are also seeking to move away from higher concentrations  of the predators.  Meanwhile the term $-v^2\nabla v$ indicates the predators
are chasing the prey, while the term $uv\nabla v$ indicates some
intraspecies competition among predators. 
\end{remark}

Note that for $(u,v)\in \R^2_{>0}$ the matrix $B(u,v)$ has one zero
eigenvalue and one positive eigenvalue. 


\begin{lem}
   The eigenvalues of $B(u,v)$ are
   \begin{equation*}
     \lambda_1 = 0,\quad \lambda_2 = \frac{\gamma uv}{(u+v)^2}
   \end{equation*}
\end{lem}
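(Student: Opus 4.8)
The plan is to compute the characteristic polynomial of the $2\times 2$ matrix
\begin{equation*}
  B(u,v) = \frac{1}{(u+v)^2}
  \begin{bmatrix}
    -uv & u^2\\ -(1+\gamma)v^2 & (1+\gamma)uv
  \end{bmatrix}
\end{equation*}
directly. Since the eigenvalues of a $2\times 2$ matrix are determined by its trace and determinant, the key computation is just these two quantities. First I would pull out the scalar factor $1/(u+v)^2$ and work with the integer-entry matrix $M = \begin{bmatrix} -uv & u^2 \\ -(1+\gamma)v^2 & (1+\gamma)uv\end{bmatrix}$, noting that the eigenvalues of $B$ are those of $M$ divided by $(u+v)^2$.

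The trace of $M$ is $-uv + (1+\gamma)uv = \gamma uv$. The determinant of $M$ is
\begin{equation*}
  \det M = (-uv)\big((1+\gamma)uv\big) - (u^2)\big(-(1+\gamma)v^2\big)
  = -(1+\gamma)u^2v^2 + (1+\gamma)u^2v^2 = 0.
\end{equation*}
Hence the characteristic polynomial of $M$ is $\lambda^2 - (\operatorname{tr} M)\lambda + \det M = \lambda^2 - \gamma uv\,\lambda = \lambda(\lambda - \gamma uv)$, so the eigenvalues of $M$ are $0$ and $\gamma uv$. Dividing by the positive scalar $(u+v)^2$ gives the eigenvalues of $B(u,v)$ as $\lambda_1 = 0$ and $\lambda_2 = \dfrac{\gamma uv}{(u+v)^2}$, as claimed.

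There is no real obstacle here; this is a routine two-line linear algebra verification. The only thing worth a word of care is that the determinant of $M$ vanishes identically, which is the precise statement that $B$ is singular everywhere (a reflection of the fact that $\nabla f_2$ is proportional to $\nabla f_1$, so the flux matrix has rank one) — this is exactly the degeneracy of the system emphasized in the surrounding text. One could alternatively exhibit the eigenvectors explicitly: $(u,v)^\top$ is easily checked to lie in the kernel of $M$, giving $\lambda_1 = 0$, and then $\lambda_2$ follows from the trace. Either route is immediate; I would present the trace-determinant argument since it is the shortest.
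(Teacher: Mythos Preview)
Your proof is correct. The paper states this lemma without proof (it is a routine computation), and your trace--determinant argument is exactly the natural way to verify it; there is nothing to compare.
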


Regularizing \eqref{eq:FFPDE1} by including the additional
diffusion terms $\epsilon \Delta u$ and $\epsilon \Delta v$ results in
a \emph{normally parabolic} diffusion system.

\begin{thm}\label{thm:npffpde1}
  For all $\epsilon>0$, the following PDE system is normally parabolic
  for strictly positive $u,v$. 
\begin{equation}   \label{eq:npffpde1}   
\begin{cases}  
\partial_t u = -\nabla \cdot (u \nabla f) + \epsilon \Delta u, &  \pin \Omega
\times (0,T)\\
  \partial_t v  = -(1+\gamma)   \nabla\cdot\left(v\nabla
    f\right)+\epsilon \Delta v, & \pin \Omega \times (0,T),\\
  \nu\cdot(u\nabla f-\epsilon \nabla u) = \nu\cdot(v\nabla f-\epsilon
  \nabla v) = 0,&  \text{on } \partial\Omega \times (0,T),\\
  u(x,0) = u_0(x)>0, & \text{in } \Omega,\\
  v(x,0) = v_0(x)>0, & \text{in } \Omega,
     \end{cases}
\end{equation} 
\end{thm}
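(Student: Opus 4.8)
The plan is to put the regularized system \eqref{eq:npffpde1} into the divergence form \eqref{vector-form}, with an explicit coefficient matrix $B_\epsilon$, and then to verify the two conditions that define the class treated in Section~\ref{sec:intro}: smoothness of the coefficients as in \eqref{eq:coeff_condition}, and the normal--ellipticity condition $\sigma(B_\epsilon(\mathbf{w}))\subset[\real z>0]$, i.e.\ $\R^2_{>0}\subset G$. First I would absorb the terms $\epsilon\Delta u$ and $\epsilon\Delta v$ into the fluxes: writing $\partial_t u=\nabla\cdot(-u\nabla f+\epsilon\nabla u)$ and $\partial_t v=\nabla\cdot(-(1+\gamma)v\nabla f+\epsilon\nabla v)$ and using the expressions for $-u\nabla f$ and $-(1+\gamma)v\nabla f$ computed just before \eqref{eq:mB1}, the flux for $\mathbf{w}=(u,v)$ is $B_\epsilon(u,v)\nabla\mathbf{w}$ with
\begin{equation*}
  B_\epsilon(u,v)=B(u,v)+\epsilon I,
\end{equation*}
where $B(u,v)$ is exactly the matrix in \eqref{eq:mB1}. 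The boundary condition in \eqref{eq:npffpde1} is then precisely the associated conormal (no-flux) condition $\nu\cdot B_\epsilon(\mathbf{w})\nabla\mathbf{w}=0$, i.e.\ it is of the form \eqref{eq:indexBC}, so the system genuinely belongs to the class \eqref{eq:2} to which Theorem~\ref{thm:amann-local-existence} applies.

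Next I would record the two structural facts. The entries of $B_\epsilon$ are rational functions of $(u,v)$ with denominator $(u+v)^2$, hence $b_{kj}\in C^\infty(\R^2_{>0})$ (indeed on all of $\{u+v\neq 0\}$), and the reaction term vanishes identically, so \eqref{eq:coeff_condition} holds with $G\supset\R^2_{>0}$ to be confirmed. For the spectral condition I would invoke the preceding eigenvalue lemma: $B(u,v)$ has eigenvalues $0$ and $\gamma uv/(u+v)^2$, so translation by $\epsilon I$ gives
\begin{equation*}
  \sigma\bigl(B_\epsilon(u,v)\bigr)=\left\{\,\epsilon,\ \epsilon+\frac{\gamma uv}{(u+v)^2}\,\right\}.
\end{equation*}
For $u,v>0$, $\gamma>0$ and $\epsilon>0$ both of these numbers are strictly positive reals, so $\sigma(B_\epsilon(u,v))\subset(0,\infty)\subset[\real z>0]$; equivalently, for every real covector $\zeta\neq 0$ the principal symbol $|\zeta|^2B_\epsilon(u,v)$ has spectrum in the open right half-plane. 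Hence $\R^2_{>0}\subset G$, the interior operator is normally elliptic on the positive cone, and the system is normally parabolic there, which is the assertion.

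I do not expect a genuine obstacle: the content of the theorem is the eigenvalue lemma together with the elementary observation that adding $\epsilon I$ cannot produce a zero or negative--real--part eigenvalue. The only points requiring care are bookkeeping ones: getting the signs right when the second--order terms are moved into the divergence--form flux, so that one really lands on $B+\epsilon I$ rather than $B-\epsilon I$ (note in passing the sign typo in the displayed line for $-(1+\gamma)v\nabla f$, which should read $\tfrac{1+\gamma}{(u+v)^2}\bigl(-v^2\nabla u+uv\nabla v\bigr)$), and confirming that the flux boundary condition in \eqref{eq:npffpde1} is exactly the conormal one for $B_\epsilon$ so that the abstract framework applies unchanged. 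Finally, since $B(u,v)$ has two \emph{distinct} eigenvalues on $\R^2_{>0}$, the matrix $B_\epsilon$ is diagonalizable there; in particular no Jordan--block subtleties arise, and the normal ellipticity is uniform on compact subsets of $\R^2_{>0}$.
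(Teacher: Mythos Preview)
Your proposal is correct and follows essentially the same approach as the paper: identify $B_\epsilon(u,v)=B(u,v)+\epsilon I$, invoke the preceding eigenvalue lemma to obtain $\sigma(B_\epsilon)=\{\epsilon,\ \epsilon+\gamma uv/(u+v)^2\}\subset(0,\infty)$ on $\R^2_{>0}$, and conclude that the system is normally parabolic there. You supply somewhat more explicit bookkeeping (smoothness of the rational entries, the conormal form of the boundary condition, diagonalizability of $B_\epsilon$) than the paper does, but the argument is the same.
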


\begin{proof}
  
 Note that  for the regularized PDE system in \eqref{eq:npffpde1}, the matrix
   of coefficients $B_\epsilon(u,v)$ is
   \begin{equation*}
     B_\epsilon = B(u,v) + \epsilon I
   \end{equation*}
having the eigenvalues $\lambda_1 = \epsilon$ and $\lambda_2 =
\frac{\gamma uv}{(u+v)^2} + \epsilon$. These eigenvalues are strictly
positive for $u,v\in R^2_{>0}$.

 The set
  \begin{equation*}
    G = \left\{\xi\in \R^2 : \sigma(B_\epsilon(\xi)) \subset [\real z
      >0]\right\}
  \end{equation*}
clearly contains $R^2_{>0}$. Since the PDE system also satisfies
(1.6)-(1.7) in this region, it follows that
\eqref{eq:npffpde1} is normally parabolic for strictly positive $u,v$.
\end{proof}





 
\begin{cor}
  Given strictly positive initial conditions $u_0$ and $v_0$ and
  taking $\bu(t) = (u(t),v(t))$
  $C(\overline{\Omega})$, \eqref{eq:npffpde1} has a unique maximal solution
  \begin{equation}
    \label{eq:npffpde-solution}
    \bu(\cdot) \in C([0,t_f])\cap C^\infty(\overline{\Omega} \times
    (0,t_f), \Rd).
  \end{equation}
\end{cor}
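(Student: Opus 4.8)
The proof is a direct application of Theorem~\ref{thm:amann-local-existence} to the system \eqref{eq:npffpde1}, whose normal parabolicity (for strictly positive $u,v$) has just been established in Theorem~\ref{thm:npffpde1}. The first step is to recognize \eqref{eq:npffpde1} as an instance of the abstract problem \eqref{eq:AbstractPDE}. Writing $\mathbf{u} = (u,v)$, the two flux-divergence right-hand sides combine into $\mathcal{B}(\mathbf{u})\mathbf{u} = \nabla\cdot(B_\epsilon(\mathbf{u})\nabla\mathbf{u})$, where $B_\epsilon = B + \epsilon I$ with $B$ the matrix in \eqref{eq:mB1}; there is no zeroth-order term, so $\mathbf{f}\equiv 0$. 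The boundary conditions $\nu\cdot(u\nabla f-\epsilon\nabla u)=0$ and $\nu\cdot(v\nabla f-\epsilon\nabla v)=0$ are precisely the conormal no-flux conditions $\mathcal{C}(\mathbf{u})\mathbf{u}=0$ of \eqref{eq:indexBC} associated to this same $B_\epsilon$, since by the flux computation preceding \eqref{eq:mB1} one has $-u\nabla f+\epsilon\nabla u = (B_\epsilon\nabla\mathbf{u})_1$ and likewise $-(1+\gamma)v\nabla f+\epsilon\nabla v = (B_\epsilon\nabla\mathbf{u})_2$.

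Next I would check the structural hypotheses \eqref{eq:coeff_condition} and the admissibility of the initial data. Each entry of $B_\epsilon(u,v)$ is a rational function with denominator $(u+v)^2$, hence belongs to $C^\infty$ on the open set $\{u+v\neq 0\}\supset \R^2_{>0}$, and the (vanishing) reaction term is trivially smooth; by the proof of Theorem~\ref{thm:npffpde1} the set $G=\{\xi\in\R^2:\sigma(B_\epsilon(\xi))\subset[\real z>0]\}$ contains $\R^2_{>0}$, so all coefficients are smooth on a neighborhood of a set containing $\R^2_{>0}$ and the problem is normally parabolic there. Since $u_0,v_0$ are continuous and strictly positive on the compact set $\overline{\Omega}$, the image $\mathbf{u}_0(\overline{\Omega})$ is a compact subset of $\R^2_{>0}\subset G$, bounded away both from $\partial\Omega$'s irrelevant concerns and, crucially, from the singularity of $B$ at the origin; choosing $p>n$ so that $W^{1,p}(\Omega)\hookrightarrow C(\overline{\Omega})$, we obtain $\mathbf{u}_0\in V$ with image a positive distance from $\partial V$.

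With these verifications in place, Theorem~\ref{thm:amann-local-existence} applies and yields a unique maximal solution $\mathbf{u}(\cdot)\in C([0,t_f))\cap C^\infty(\overline{\Omega}\times(0,t_f),\Rd)$, which is the asserted regularity ($t_f$ possibly finite, and equal to $+\infty$ whenever the orbit remains bounded in $W^{1,p}(\Omega)$ and away from $\partial V$, by the remark following that theorem). The only genuinely delicate point is the function-space bookkeeping: confirming that the stated data indeed lands in Amann's admissible class $V$ (this is exactly where strict positivity is used, to keep $\mathbf{u}_0(\overline{\Omega})$ compactly inside $G$) and that the conormal operator generated by $B_\epsilon$ coincides with the written no-flux boundary condition. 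Beyond that the corollary is immediate, since normal parabolicity was the substantive input and it has already been proved in Theorem~\ref{thm:npffpde1}.
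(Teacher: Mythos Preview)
Your proposal is correct and takes essentially the same approach as the paper: the paper's proof is the single sentence ``This follows directly from Theorem~\ref{thm:amann-local-existence},'' and you are doing precisely that application, just spelling out the verification of the hypotheses (that \eqref{eq:npffpde1} fits the abstract form \eqref{eq:AbstractPDE} with $B_\epsilon$ and no-flux conormal boundary, that the coefficients are smooth on $G\supset\R^2_{>0}$, and that strictly positive continuous data lands in $V$). Your added detail is accurate and useful, but the underlying argument is identical.
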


\begin{proof}
  This follows directly from Theorem~\ref{thm:amann-local-existence}.
\end{proof}

\begin{remark}
  Note that $B(u,v)$ in \eqref{eq:mB1} is not symmetric and our
  resulting PDE is non-coercive. Nonetheless, under our
  assumptions,  the perturbed system involving $B_\epsilon(u,v)$
  satisfies the conditions to be normally parabolic having smooth, local in time solutions.
\end{remark}


\section{A simple system: two coupled spatial points}
\label{sec:ode}

It is instructive to consider a discrete spatial model for the
fitness-gradient flux, where population movements can be described by
a system of ordinary differential equations (ODE).
Here we consider two populations moving
between two nodes. In analogy to
the system described by \eqref{eq:FFPDE1},  movement of each population is
determined by a fitness gradient; simply put,  movement is toward
the node where population fitness is higher. As in Section~\ref{sec:recast}, fitness is
defined by the expected payoff of an underlying 
evolutionary game between the two populations, and we again imagine
the dynamic as movement of a prey population, with density given by
$u$ and a predator population, whose density is given by $v$. 


In this context, fitness depends only on the population ratio
$u/v$ at each node. A steady state is reached when these ratios are equal
between nodes, or when both populations accumulate at a single node,
leaving the other node empty. In other words, the prey either
distributes its population between nodes so that it is in constant
ratio to the predator population, or the entire prey population
aggregates to a single node, followed by

As we show, the system  approaches a steady state
for any initial conditions and for particular initial conditions, both populations 
accumulate on a single node.  This is perhaps the most interesting
behavior of this basic model as it provides some insight into the pinching off
behavior observed the fitness-gradient flux PDE system for two
populations given by~\eqref{eq:FFPDE1}.  




For $i=1,2$, let $u_i(t)$ and $v_i(t)$ denote populations at node $i$
at time $t\geq 0$.    In the model under consideration, population changes are due entirely
to migration  between nodes, as described by the following system of ODEs:
\begin{align}
  \label{eq:ODE1}
  \dot{u}_1 &=
  \begin{cases}
    u_2\left[f_1(u_1,v_1)-f_1(u_2,v_2)\right], & \text{if } f_1(u_1,v_1)\geq
    f_1(u_2,v_2),\\
    -u_1\left[f_1(u_2,v_2)-f_1(u_1,v_1)\right], & \text{if } f_1(u_1,v_2)< f_1(u_2,v_2),
  \end{cases}\\
  \label{eq:ODE2}
  \dot{v}_1 & = \begin{cases}
    \beta v_2\left[f_2(u_1,v_1)-f_2(u_2,v_2)\right], & \text{if } f_2(u_1,v_1)\geq
    f_2(u_2,v_2),\\
    -\beta v_1\left[f_2(u_2,v_2)-f_2(u_1,v_1)\right], & \text{if }
    f_2(u_1,v_2)< f_2(u_2,v_2),
  \end{cases}\\
\dot{u}_2 &= -\dot{u}_1,\\
\dot{v}_2 &= -\dot{v}_1.
\end{align}
The function $f_1(u,v)$ describes a fitness for $u$ that depends
only on the
relative size of the populations $u$ and $v$ (at a given node).
Similarly, $f_2(u,v)$ describes the fitness of $v$.  The parameter
$\beta>0$ indicates the degree to population $v$ is sensitivity 
to a difference in fitness, relative to ppopulation $u$'s sensitivity, as discussed in Section~\ref{sec:recast}.

For each population, migration between the nodes corresponds to
movement in the direction of increasing fitness. Fixed
points of the system occur when the fitness of both species is equal
between the two nodes. For our definition of fitness, this occurs when
the populations satisfy
\begin{equation*}
  u_1v_2 = u_2v_1.
\end{equation*}
This occurs when the population ratios at each node are equal or when
both populations accumulate at a single node, as discussed below.


As in \ref{sec:recast}, we take
\begin{equation}
  \label{eq:evol-fitness-defn}
  f_1(u,v) = \frac{a_{11}u+a_{12}v}{u+v}, \quad f_2(u,v) = \frac{a_{21}u+a_{22}v}{u+v},
\end{equation}
and set
\begin{equation}\label{eq:kappaA}
  a_{11}-a_{12}=1,\quad \kappa_A = a_{21}-a_{22}>0
\end{equation}


Let $\dfit$ denote the difference in fitness for population $u$ between the two nodes, 
\begin{align*}
  \dfit &= f_1(u_1,v_1)-f_1(u_2,v_2).
\end{align*}
Then using \eqref{eq:evol-fitness-defn}, \eqref{eq:kappaA} we have
\begin{align}
\label{eq:dfit}
  \dfit &= \frac{1}{(u_1+v_1)(u_2+v_2)}\left(u_1v_2 -
    u_2v_1\right),\\
 \kappa_A\,\dfit & = f_2(u_1,v_1)-f_2(u_2,v_2),\nonumber
\end{align}
Again as in Section~\ref{sec:recast}, we  make the assumption $\beta\kappa_A>1$,
and define the  
positive parameter $\gamma$ by
\begin{equation}
  \label{eq:gamma-defn}
  \gamma = \beta\kappa_A-1 > 0.
\end{equation}
This allows us to re-write \eqref{eq:ODE1}-\eqref{eq:ODE2} in terms of
$\dfit$; effectively, population $v$ is more sensitive than $u$ to
differences in $f$ between the two nodes.  
\begin{align}
  \label{ODE-u}
    \dot{u}_1 &=
    \begin{cases}
      u_2\dfit, & \text{if } \dfit\geq0,\\
      u_1\dfit, & \text{if } \dfit<0,
    \end{cases}\\
    \dot{v}_1 &= 
    \begin{cases}
      (1+\gamma)v_2\dfit, & \text{if } \dfit\geq 0,\\
      (1+\gamma)v_1\dfit, & \text{if } \dfit<0,
    \end{cases}\label{ODE-v}\\
        \dot{u}_2 &= -\dot{u}_1,\\
        \dot{v}_2 &= -\dot{v}_1,\\
    u_i(0) &= u_i^0>0,\quad v_i(0) = v_i^0>0, \text{ for } i = 1,2.\label{ODE-ic}
\end{align}
\begin{remark}
In our reduced model, the assumption $\gamma>0$ implies that the
population $v$ is more sensitive to (or responds more rapidly to a
change in) the
difference in fitness, $\dfit$. This assumption is important for the
parabolic nature of the PDE discussed in Section~\ref{sec:recast}. 
\end{remark}

\begin{remark}
 From \eqref{eq:dfit}, we see that $u_1v_2=u_2v_1$ implies $\dfit=0$
 and hence $\dot{u}_i =
\dot{v}_i=0$. Motivated by this condition, we define $E(t) := u_1v_2-u_2v_1$, so
that the ODE system \eqref{ODE-u}-\eqref{ODE-ic} is at a steady state when $E=0$.  We  show that
$E\xrightarrow[]{t\to \infty} 0$, for any positive initial
conditions. Furthermore, for some initial conditions, the total
population accumulates at one node with the other node emptying out.

\end{remark}

Let us denote the total population at each node $i$ at time $t$ by
$S_i(t)$, 
  \begin{align*}
    S_1(t) &= u_1(t)+v_1(t),\\
    S_2(t) &= u_2(t)+v_2(t).
  \end{align*}
Because the populations $u_1+u_2$ and $v_1+v_2$ are conserved, there is
a bound on each $u_i$ and $v_i$ and hence on the product $S_1S_2$.
Therefore there exists a constant $M>0$ such that for all $t\geq0$,   
\begin{equation*}
S_1(t)S_2(t)\leq M.
\end{equation*}

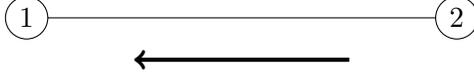
\begin{figure}[h]
  \centering
  \setlength{\figwidth}{\textwidth}
    \begin{tikzpicture}
    \begin{axis}[axis lines=none,width=4in, height=1.5in,
      xtick=\empty,
      ytick=\empty,
      xmin=-.25,
      xmax=1.25,
      ymin=-1,
      ymax=1
      ]
      \draw[] (axis cs: 0,0) circle (8pt) node[] {1} (axis cs: .05,0)
      -- (axis cs:.95,0)(axis cs: 1,0) circle (8pt)
      node[] {2};
      \draw[->, ultra thick] (axis cs: .75,-.5) -- (axis cs: .25,-.5);
\end{axis}  
  \end{tikzpicture}


  \caption{Schematic of the two node model. The arrow denotes the direction of  flux for $u$ when $E>0$ (equivalently when $\dfit >0$).}
\label{fig:cell}
\end{figure}



\begin{lem}\label{lem-ODE-ss-1}
  For any initial conditions $u_i(0),v_i(0)>0$ for $i=1,2$, the ODE system \eqref{ODE-u}-\eqref{ODE-ic} converges to a steady state such that $E =u_1v_2-u_2v_1 = 0$.
\end{lem}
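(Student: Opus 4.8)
The plan is to combine two conservation laws with a monotonicity that switches on once the sign of $E$ is fixed. First I would record that $U := u_1+u_2$ and $V := v_1+v_2$ are constant along solutions; substituting $u_2 = U-u_1$ and $v_2 = V-v_1$ turns $E$ into the affine quantity $E = Vu_1 - Uv_1 = Uv_2 - Vu_2$, and by \eqref{eq:dfit} one has $\dfit = E/(S_1S_2)$, so $\dfit$ and $E$ always share the same sign; moreover, when $E\ge 0$ the inequality $E = u_1v_2 - u_2v_1 \le u_1v_2 \le S_1S_2$ gives the crude two-sided bound $0\le\dfit\le 1$. By the node-relabeling symmetry of \eqref{ODE-u}--\eqref{ODE-v} (which sends $E\mapsto -E$) we may assume $E(0)\ge 0$; the case $E(0)=0$ is trivial, since then $\dfit\equiv 0$ and the system is already at rest.

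Next I would handle well-posedness together with invariance of the sign of $E$. The vector field in \eqref{ODE-u}--\eqref{ODE-v} is continuous and locally Lipschitz on the open positive orthant: the $u_1$-component equals $(\dfit)^{+}u_2 - (\dfit)^{-}u_1$ and the $v_1$-component equals $(1+\gamma)\bigl[(\dfit)^{+}v_2 - (\dfit)^{-}v_1\bigr]$, products of bounded locally Lipschitz functions. Every point of $\{E=0\}$ is an equilibrium, so by uniqueness a trajectory with $E(0)>0$ cannot meet $\{E=0\}$ in finite time, hence $E>0$ on the whole maximal interval of existence $[0,T^{*})$. On that interval $\dfit\in[0,1]$ forces $\dot u_1,\dot v_1\ge 0$ (so $u_1\le U$, $v_1\le V$, and $u_1\ge u_1(0)$, $v_1\ge v_1(0)$), while $u_2(t)=u_2(0)\exp\!\bigl(-\int_0^t\dfit\,ds\bigr)\ge u_2(0)e^{-t}$ and $v_2(t)=v_2(0)\exp\!\bigl(-(1+\gamma)\int_0^t\dfit\,ds\bigr)\ge v_2(0)e^{-(1+\gamma)t}$; thus if $T^{*}$ were finite the trajectory would stay in a fixed compact subset of the open orthant, contradicting the blow-up criterion. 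Hence $T^{*}=\infty$ and $E(t)>0$ for all $t\ge0$.

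With the sign fixed, convergence comes for free: $u_1$ and $v_1$ are nondecreasing and bounded, so $u_1(t)\to u_1^{\infty}$ and $v_1(t)\to v_1^{\infty}$, whence $u_2\to U-u_1^{\infty}$, $v_2\to V-v_1^{\infty}$, and $E(t)=Vu_1-Uv_1\to E^{\infty}\ge 0$; the coordinates converge to a point with $E=0$, which is a steady state of the system. To finish I would rule out $E^{\infty}>0$: using $S_1S_2\le M$ one gets $\dfit(t)=E(t)/(S_1S_2)\ge E^{\infty}/(2M)>0$ for all large $t$, so if $U-u_1^{\infty}>0$ then $\dot u_1 = u_2\,\dfit$ stays bounded below by a positive constant, contradicting $u_1\le U$; hence $u_1^{\infty}=U$, and the same argument applied to $\dot v_1=(1+\gamma)v_2\,\dfit$ gives $v_1^{\infty}=V$. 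But then $E^{\infty}=u_1^{\infty}v_2^{\infty}-u_2^{\infty}v_1^{\infty}=0$, a contradiction, so $E(t)\to0$.

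I expect the main obstacle to be the bookkeeping in the second paragraph rather than the limiting argument: one must verify that the piecewise-defined vector field is genuinely locally Lipschitz (so that $\{E=0\}$ acts as an invariant barrier) and that no population — in particular the populations at the node being drained — vanishes in finite time, so that the solution is global and the LaSalle-type argument of the third paragraph runs on all of $[0,\infty)$. Once those points are secured, the remainder is a short computation.
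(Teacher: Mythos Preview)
Your argument is correct but follows a different route from the paper's. The paper differentiates $E$ directly: on the branch $\dfit\ge 0$ one computes
\[
\dot E \;=\; \dot u_1(v_1+v_2)-\dot v_1(u_1+u_2)
\;=\;\bigl[-E-\gamma v_2(u_1+u_2)\bigr]\,\dfit
\;\le\;-E\,\dfit
\;=\;-\tfrac{E^2}{S_1S_2}\;\le\;-\tfrac{E^2}{M},
\]
and then compares with $F(t)=E(0)/(CE(0)t+1)$ to conclude $E(t)\to 0$. That is shorter and, unlike your proof, gives an explicit $O(1/t)$ decay rate for $E$. Your approach instead exploits the monotonicity of $u_1,v_1$ together with the conservation of $U,V$, obtaining convergence of all coordinates first and only then ruling out $E^{\infty}>0$ by a LaSalle-type contradiction; no derivative of $E$ is ever computed. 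What you gain is robustness and rigor: you handle carefully the local Lipschitz continuity of the piecewise vector field, the invariance of $\{E>0\}$ via uniqueness at the equilibrium set $\{E=0\}$, and global existence in the open orthant, all of which the paper leaves implicit. Either proof suffices for the lemma as stated; the paper's is preferable when quantitative decay is wanted, yours when one wants a clean qualitative argument that avoids differentiating $E$.
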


\begin{proof}
  Again we assume that $\dfit$ is positive at $t=0$.   Then
  \begin{equation*}
    \begin{cases}
      \dot{u}_1 = u_2 \, \dfit, \\
      \dot{v}_1 = (1+\gamma) v_2  \, \dfit,\\
      \dot{u}_2 = -\dot{u}_1,\\
      \dot{v}_2 = -\dot{v}_1,
    \end{cases}
  \end{equation*}
and $E(0) = (u_1v_2 - u_2v_1)> 0$, since $\dfit >0$.
Next,
\begin{align*}
  \dot{E} & = \dot{u_1}v_2 + u_1\dot{v}_2- \dot{u}_2v_1 - u_2\dot{v}_1 \\
  & =  \dot{u}_1(v_1+v_2) - \dot{v_1}(u_1+u_2)\\
  & =  u_2(v_1+v_2) \dfit-  (1+\gamma)v_2(u_1+u_2)  \, \dfit\\
  & = [(u_2v_1-u_1v_2)-\gamma v_2(u_1+u_2)]  \, \dfit\\ 
  & \leq -E  \, \dfit \\
  & = -\frac{1}{S(t)T(t)}E^2 
  \, \leq \,\, -\frac{1}{M}E^2. 
\end{align*}
  Thus,
  \begin{equation*}
    \dot{E}\leq -CE^2
  \end{equation*}
    for some $C>0$, which implies that $E(t)\xrightarrow[]{t\to\infty}
  0.$ To see this define   \begin{equation*}
  F(t) = \frac{E(0)}{Ct+1},
\end{equation*}
and note that $F(0)=E(0)$, $\dot{E}\leq \dot{F}$, and
$F(t)\xrightarrow[]{t\to \infty} 0.$ 
\end{proof}

Since  $u_1$ and $v_1$ are initially increasing (for $E(0)>0$), we have in the limit
  \begin{equation*}
    \frac{u_1}{v_1} v_2 = u_2.
  \end{equation*}
If $v_2> E$ at $t=0$, then $v_2$ and $u_2$ remain bounded away from zero
and the steady state condition can also be written
$$
\frac{u_1}{v_1} = \frac{u_2}{v_2}.
$$

When $v_2\leq E$ at $t=0$, however, we will show that node 2 empties out, as the entire population moves to node 1.
Consider the projection of trajectories to the
$(v_2,E)$-phase plane (see Figure \ref{fig:vE-plane}), for which
\begin{align}
  \label{phase_plane}
  \frac{dE}{dv_2} = \frac{\dot{E}}{\dot{v_2}} = \frac{E+\gamma
    v_2}{(1+\gamma)v_2} = \frac{1}{(1+\gamma)v_2}E + \frac{\gamma}{(1+\gamma)}.
\end{align}
Notice that when $E(t)=v_2(t)$, then
\begin{equation*}
  \frac{dE}{dv_2} = 1,
\end{equation*}
so that the trajectory remains along the line $v_2=E$, approaching the origin as $t\to \infty$. This line  divides the phase plane into two regions that characterize the asymptotic behavior.  Trajectories for which $E\geq v_2$ at $t=0$ (i.e. begin on or above the line $E = v_2$) will also approach the origin, while trajectories with $E<v_2$ at $t=0$ (beginning below the line) approach a positive value of $v_2$ along the $v_2$-axis - see Figure~\ref{fig:vE-plane}.

\begin{figure}[h]
  \centering
  \setlength{\figwidth}{.8\textwidth}
  \input{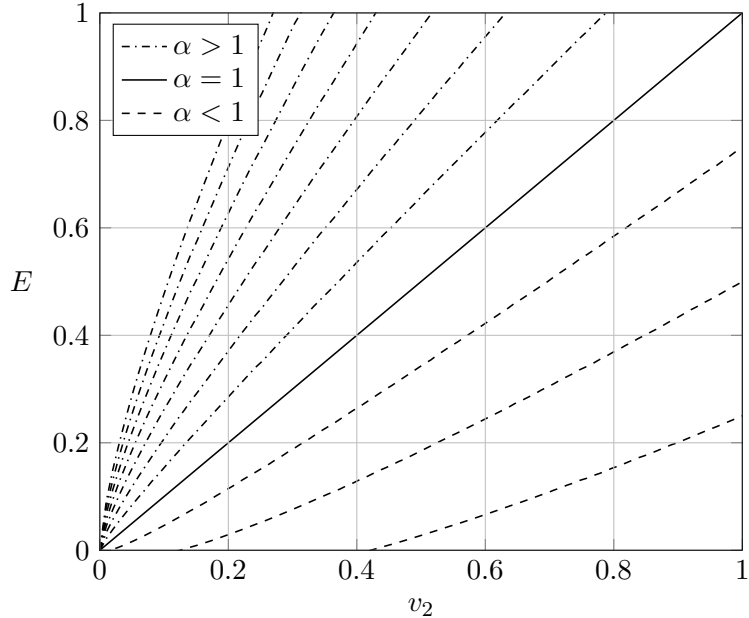}
  
  \caption{ODE dynamics in the $(v_2,E)$-plane, for
    $\gamma=0.5$ and $E>0$ at $t=0$. Trajectories satisfy 
    \protect \eqref{traj-soln}, with $\alpha$ determined by the initial
      conditions ($\alpha = E(1)$).  For $\alpha>1$, trajectories approach the origin as
    $t\to \infty$.  For $\alpha=1$, the trajectory approaches the origin
    along the line $E=v_2$. Trajectories for $\alpha<1$ approach the point
    $\left((1-\alpha)^{\frac{1+\gamma}{\gamma}},0\right)$ along the $v_2$-axis.} 
  \label{fig:vE-plane}
\end{figure}
If we make a normalization so that $u_1+u_2 = 1$, then we can write the explicit solution to \eqref{phase_plane} for $E$ as a function of
$v_2$ is   
\begin{equation}\label{traj-soln}
  E(v_2) = v_2 + (\alpha-1)v_2^{\frac{1}{1+\gamma}},
\end{equation}
where $\alpha$ is a parameter that characterizes the trajectories.  We use
$v_2=1$ as a reference value in the $(v_2,E)$-plane, and let $\alpha$ denote
$E(1)$, the value of $E$ when $v_2=1$, which will depend on the initial
conditions.  Specifically, if $E_0$ and $v_2^0$ denote the values of
$E$ and $v_2$ at time $t=0$, then
\begin{equation*}
  \alpha = \frac{E_0-v_2^0}{(v_2^0)^{\frac{1}{1+\gamma}}} + 1.
\end{equation*}
Recall
from \eqref{ODE-u}-\eqref{ODE-ic} that $v_2$ and $E$ are decreasing
whenever $E>0$. The choice $\alpha=1$
corresponds to the trajectory along the line $E=v_2$. When $\alpha>1$,
the trajectory lies above the line $E=v_2$ and approaches the origin as
$t\to \infty$.  When $\alpha<1$,  the trajectory is below the line $E=v_2$,
intersecting the $v_2$-axis at $v_2 = (1-\alpha)^{\frac{1+\gamma}{\gamma}}$.
We summarize these results in the theorem below.

\begin{thm}
\label{ODE-ss-lem}
     Define $E(t)=u_1v_2-u_2v_1$.  If $-v_1(0)<E(0)<v_2(0)$, then the ODE
     system \eqref{ODE-u}-\eqref{ODE-ic} approaches a steady state
     such that
     \begin{equation}
       \frac{u_1}{v_1} = \frac{u_2}{v_2}. \label{ss-condition}
     \end{equation}
      If $v_2(0)\leq E(0)$, then $u_2,v_2\to 0$, whereas if $v_1(0)\leq
      -E(0)$, then $u_1,v_1\to 0$.
\end{thm}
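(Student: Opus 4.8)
The plan is to reduce everything to the case $E(0)>0$ and then read off the asymptotics from the phase-plane trajectory \eqref{traj-soln}. When $E(0)=0$ the initial data already satisfies $u_1v_2=u_2v_1$, so the system sits at a fixed point and \eqref{ss-condition} holds trivially. For $E(0)<0$ I would invoke the symmetry of \eqref{ODE-u}--\eqref{ODE-ic} under interchanging the labels of the two nodes: this map carries $E$ to $-E$ and $v_2(0)$ to $v_1(0)$, so the hypothesis $-v_1(0)<E(0)$ becomes $\tilde E(0)<\tilde v_2(0)$ for the relabeled system, and each conclusion established for $E(0)>0$ transfers directly. Hence it suffices to assume $E(0)>0$.

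For $E(0)>0$ I would first argue that $\dfit(t)>0$ for all $t\ge0$. By the computation in the proof of Lemma~\ref{lem-ODE-ss-1}, as long as $\dfit\ge0$ one has $\dot E\le -CE^2$ with $C>0$; a first zero $t_1$ of $E$ would contradict the resulting bound $E(t_1)\ge E(0)/(1+CE(0)t_1)>0$, so $E$, and therefore $\dfit$, stays strictly positive. Consequently the system remains in the branch $\dot u_1=u_2\dfit$, $\dot v_1=(1+\gamma)v_2\dfit$, $\dot u_2=-\dot u_1$, $\dot v_2=-\dot v_1$, so $u_1,v_1$ are increasing and $u_2,v_2$ decreasing; all four are bounded since $u_1+u_2$ and $v_1+v_2$ are conserved, hence each converges to a limit $u_i^\infty,v_i^\infty\ge0$, with $v_1^\infty\ge v_1(0)>0$. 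Passing to the limit in $E=u_1v_2-u_2v_1$ and using $E(t)\to0$ from Lemma~\ref{lem-ODE-ss-1} yields $u_1^\infty v_2^\infty=u_2^\infty v_1^\infty$.

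It then remains to decide whether $v_2^\infty=0$, and this is exactly what the $(v_2,E)$-phase-plane analysis settles. The trajectory through the initial point is the curve \eqref{traj-soln}, and the parameter satisfies $\alpha>1$, $\alpha=1$, $\alpha<1$ according to whether $E(0)>v_2(0)$, $E(0)=v_2(0)$, $E(0)<v_2(0)$. Since $v_2(t)\searrow v_2^\infty$ and $E(t)\to0$ with $E(t)=E(v_2(t))$ along the curve, continuity gives $E(v_2^\infty)=0$ for the explicit function in \eqref{traj-soln}. If $v_2(0)\le E(0)$ (so $\alpha\ge1$) the only nonnegative zero of $v_2\mapsto v_2+(\alpha-1)v_2^{1/(1+\gamma)}$ is $0$, forcing $v_2^\infty=0$; then $u_2^\infty v_1^\infty=0$ with $v_1^\infty>0$ gives $u_2^\infty=0$, i.e. $u_2,v_2\to0$. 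If instead $0<E(0)<v_2(0)$ (so $0<\alpha<1$), the curve has a second zero at $v_2^\ast=(1-\alpha)^{(1+\gamma)/\gamma}>0$, and since $E(\cdot)>0$ on $(v_2^\ast,\infty)$ while $E(t)>0$ for all $t$, $v_2$ cannot decrease past $v_2^\ast$; hence $v_2^\infty=v_2^\ast>0$, and dividing $u_1^\infty v_2^\infty=u_2^\infty v_1^\infty$ by $v_1^\infty v_2^\infty$ gives \eqref{ss-condition}. Together with the $E(0)=0$ and $E(0)<0$ reductions this covers all three cases, and the limiting configuration is a genuine steady state since $E^\infty=0$ makes every right-hand side vanish.

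The step I expect to require the most care is the passage from $E\to0$ to the precise value of $v_2^\infty$: one must rule out $v_2$ stalling at an intermediate value, which is why the argument leans on the monotone convergence of $v_2$ together with the explicit, continuous trajectory $E(v_2)$ from \eqref{traj-soln} and the sign information $E(t)>0$ for all $t$. Everything else — conservation of totals, monotonicity of the node populations, and the elementary differential inequality for $E$ — is routine.
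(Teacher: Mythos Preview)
Your overall strategy---Lemma~\ref{lem-ODE-ss-1} for $E\to0$, the explicit phase-plane trajectory \eqref{traj-soln} to separate the cases, and the node-swap symmetry to reduce $E(0)<0$ to $E(0)>0$---is exactly the route the paper takes; the monotone-convergence argument you supply for the individual limits $u_i^\infty,v_i^\infty$ is a welcome piece of rigor the paper leaves implicit.

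One step fails as written, however. The inequality $\dot E\le -CE^2$ from Lemma~\ref{lem-ODE-ss-1} gives the \emph{upper} bound $E(t)\le E(0)/(1+CE(0)t)$, not the lower bound you quote, so it cannot by itself exclude a first zero of $E$. The fix is easy: on the branch $\dfit>0$ one has $\dot u_2=-u_2\,\dfit$ and $\dot v_2=-(1+\gamma)v_2\,\dfit$, so $u_2(t),v_2(t)>0$ for all finite $t$ and the vector field stays locally Lipschitz; since every point with $E=0$ is a rest point, uniqueness of ODE solutions then prevents $E$ from reaching zero in finite time. Equivalently, writing $\dot E=-\bigl(E+\gamma v_2(u_1+u_2)\bigr)E/(S_1S_2)$ shows $\dot E\ge -KE$ on any bounded time interval, which already forces $E(t)>0$. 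A minor aside: \eqref{traj-soln} is derived after the normalization $u_1+u_2=1$, so the separating line in general is $E=(u_1+u_2)v_2$; the theorem statement and your argument both carry this normalization implicitly.
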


\section{Derivation of the fitness gradient flux PDE}
\label{sec:continuum-limit}
\newcommand{\dt}{\delta t}
\newcommand{\dx}{\delta x}

In this section we derive the fitness gradient flux PDE
\eqref{eq:introPDE} in two dimensions; this system was first described in
\cite{deforest2013}.  Our derivation is similar to continuum limit
arguments for biased
random walks that appear in \cite{Okubo1986, Shigesada1979}.  Biased
random walks in theoretical populations are discussed in greater
detail in  \cite{Skellam1951}.

Let $\{x_{ij}\}$ denote a $J_x\times J_y$ uniform grid with uniform meshsize $\dx$.  At time $t$, each grid point $x_{ij}$ has populations $u(x_{ij},t)$ and
$v(x_{ij},t)$. Our model is based on the following assumption: the movement of each population on this grid is governed by transition probabilities, which are proportional to local differences in fitness, and defined in the following.

\begin{defn}
  Given two grid points $a$ and $b$, and fixed timestep $\dt$, we define the transition
  probability $p(a,b;t)$ to be the probability that an individual
  from population $u$ moves from $a$ to $b$ in
  the time interval $(t,t+\dt)$. We define an analogous transition
  probability $q(a,b;t)$ for the population $v$. 
\end{defn}
Note that the allowed transitions will be made effectively local by restricting points $a$ and $b$ to be nearest neighbors on the grid.
%
We use the following notation conventions throughout this section.
\begin{notation}
  For lattice nodes denoted by $a$, $x_{ij}$, or $x_\alpha$ with 
  $\alpha \in \{ (i,j-1), (i,j+1), (i-1,j), (i+1,j) \}$, let
  \begin{equation*}
    u_a^t := u(a,t), \quad u_{ij}^t = u(x_{ij},t), \quad u_\alpha^t =
    u(x_\alpha, t),
  \end{equation*}
and similarly for the fitness functions $f(u,v)$, $g(u,v)$, let
  \begin{equation*}
    f_{ij}^t = f(u_{ij}^t,v_{ij}^t).
  \end{equation*}
We also define  the following forward-difference and backward
difference operators
\begin{align*}
  D_x^+ u_{ij} = \frac{1}{\dx}\left(u_{i+1,j}-u_{ij}\right),\\
  D_y^+ u_{ij} = \frac{1}{\dx}(u_{i,j+1} -u_{ij}),\\
  D_x^- u_{ij} = \frac{1}{\dx}\left(u_{ij}-u_{i-1,j}\right),\\
  D_y^- u_{ij} = \frac{1}{\dx} \left(u_{ij}-u_{i,j-1}\right).
\end{align*}

\end{notation}

\begin{defn}
  Let $a$ and $b$ be adjacent nodes and let the fitness functions $f$
  and $g$ be bounded continuous functions.  Define the bounds
  \begin{equation*}
    M_1 = \left\{\sup f(\mathbf{x})-\inf f(\mathbf{x})\right\}, \quad
    M_2 = \left\{\sup
      g(\mathbf{x})-\inf g(\mathbf{x})\right\},
  \end{equation*}
where the sup and inf are taken over $\mathbf{x}\in \R^+\times \R^+$
(the domain of the fitness functions).
We define the transition probabilities $p(a,b;t)$ and
  $q(a,b;t)$ to depend on the fitness differences as
  \begin{align}
    \label{transition-probabilities}
      p(a,b;t) &=
      \begin{cases}
        \frac{1}{4 M_1} (f_b^t-f_a^t), & \text{if } f_b^t\geq f_a^t,\\
         0, & \text{if } f_b^t <f_a^t,
      \end{cases}
\\
  q(a,b;t) & =
  \begin{cases}
    \frac{1}{4M_2} (g_b^t-g_a^t),   & \text{if } g_b^t\geq g_a^t, \\
 0, & \text{if }  g_b^t< g_a^t.
  \end{cases}
\notag
  \end{align}

\end{defn}

Note that in this formulation, at most one of $p(a,b;t)$ or  $p(b,a;t)$ can
be nonzero, representing the fact that an individual has nonzero probability of moving to an adjacent node if and only if the fitness is strictly higher at that
node. Thus, populations travel to adjacent points by moving in the direction of increasing fitness, as in the two-node model of Section \ref{sec:ode}.

The scaling constants, $\frac{1}{4M_1}$ and $\frac{1}{4M_2}$, ensure that for any node $x_{ij}$,
\begin{equation*}
\sum_{\alpha}  p(x_{ij},x_\alpha;t)\leq 1, \quad \text{ and }  \quad \sum_{\alpha}
q(x_{ij},x_\alpha;t)\leq 1,
\end{equation*}
 where $\alpha$ ranges over the set of nodes adjacent to $x_{ij}$.

 We now derive the PDE for the population density $u$; the argument for 
 $v$ is entirely similar. Consider  
 $u$ at the node $x_{ij}$ and at time $t+\dt$:  
\begin{align*}
  u(x_{ij},t+\dt)  &= u_{ij}^t +\frac{1}{4M_1} \sum_\alpha
 u_\alpha^t p(x_\alpha,x_{ij};t) -
  \frac{1}{4M_1}\sum_\alpha u_{ij}^t
  p(x_{ij},x_\alpha;t)\\ 
 & = u_{ij}^{t+\dt}.
\end{align*}
With respect to either coordinate direction,
the fitness function $f$ may be increasing, decreasing, or achieve a
local extremum at $x_{ij}$. We show the case where the fitness
function $f$ is increasing with respect to both coordinate directions.
\begin{equation*}
  f_{i+1,j}^t \geq f_{ij}^t \geq f_{i-1,j}^t,\quad f_{i,j+1}^t\geq
  f_{ij}^t \geq f_{i,j-1}^t.
\end{equation*}
\begin{align*}
  u_{ij}^{t+\dt} -u_{ij}^t & = \frac{1}{4M_1}
  u_{i-1,j}^t\left(f_{ij}^t-f_{i-1,j}^t\right) -\frac{1}{4M_1} u_{ij}^t
  \left(f_{i+1,j}^t-f_{ij}^t\right)\\
 &\quad +\frac{1}{4M_1} u_{i,j-1}^t \left(f_{ij}^t-f_{i,j-1}^t\right) -
 \frac{1}{4M_1}u_{ij}\left(f_{i,j+1}^t-f_{ij}^t\right),\\
 \frac{u_{ij}^{t+\dt}-u_{ij}^t}{\dt} & =
 \left(\frac{\dx^2}{4M_1\dt}\right)\left(\frac{u_{i-1,j}^t}{\dx}\left(\frac{f_{ij}^t-f_{i-1,j}^t}{\dx}\right)
   -
   \frac{u_{ij}^t}{\dx}\left(\frac{f_{i+1,j}^t-f_{ij}^t}{\dx}\right)\right)\\
 & \quad + \left(\frac{
     \dx^2}{4M_1\dt}\right)\left(\frac{u_{i,j-1}^t}{\dx}\left(\frac{f_{ij}^t-f_{i,j-1}^t}{\dx}\right)-\frac{u_{ij}^t}{\dx}\left(\frac{f_{i,j+1}^t-f_{ij}^t}{\dx}\right)\right)\\
& = \frac{ \dx^2}{4M_1\dt}\left(\frac{u_{i-1,j}^t}{\dx}D_x^+
  f_{i-1,j}^t-\frac{u_{ij}}{\dx}D_x^+ f_{ij}^t\right)\\
 &\quad +
 \frac{\dx^2}{4M_1\dt}\left(\frac{u_{i,j-1}^t}{\dx}D_y^+
     f_{i,j-1}^t-\frac{u_{ij}}{\dx}D_y^+ f_{ij}\right)\\
 & = \frac{\dx^2}{4M_1\dt}\left[-D_x^-\left(u_{ij}^t D_x^+
     f_{ij}^t\right) - D_y^- \left(u_{ij}^t D_y^+ f_{ij}^t\right)\right].
\end{align*}
Since, by assumption, $-D_x^+ f_{ij}<0$ and
$-D_y^+f_{ij}<0$, we notice that the backward-difference operator in the final line is equivalent to a
first-order upwinding scheme \cite{Lui2011}.  Consideration of the other
cases bears this out.  Therefore, by taking a limit as $\dt\to 0$ and
$\dx\to 0$ in such a way that
\begin{equation*}
  \lim_{\dt\to 0 \atop \dx\to 0} \frac{\dx^2}{\dt} = 1,
\end{equation*}
we arrive at the fitness gradient equation in \eqref{eq:introPDE}, and
given below in \eqref{eq:reduced-form-PDE}, where $\beta_1 =
\frac{1}{4M_1}$ and $\beta_2 = \frac{1}{4M_2}$.


\section{Analysis of the Fitness Gradient Flux System}
\label{sec:analysis}

In this section we analyze the system
\begin{equation}
  \label{eq:reduced-form-PDE}
  \begin{cases}
      \partial_t u = -
  \nabla\cdot\left(u\nabla f\right) , & \text{in } \Omega \times (0,T),\\
  \partial_t v  = -(1+\gamma)
  \nabla\cdot\left(v\nabla f\right) & \text{in } \Omega \times (0,T),\\
  \nu\cdot(u\nabla f) = 0, \quad \nu\cdot(v\nabla f) = 0,&  \text{on } \partial\Omega \times (0,T),\\
  u(\mathbf{x},0) = u_0(\mathbf{x})>0, & \text{in } \Omega,\\
  v(\mathbf{x},0) = v_0(\mathbf{x})>0, & \text{in } \Omega.
  \end{cases}
\end{equation}

\begin{remark}
  Under our assumptions, the populations $u$ and $v$ experience the same fitness
  gradient $\nabla f$, but the population $v$ has a higher
  sensitivity to the gradient than does $u$, since $\gamma > 0$.  The game dynamics lead the population  $u$ to aggregate, and to
  flee regions where the density of $v$ is high, while the population
  $v$ pursues $u$.  Due to the $v$ population's higher sensitivity, it is
  it acts to inhibit $u$'s aggregation. If $\gamma < 0$
  however, then \eqref{eq:reduced-form-PDE} is ill-posed.
   \end{remark}

As discussed in Section~\ref{sec:recast}, we may write
\eqref{eq:reduced-form-PDE} as
\begin{equation}\label{vector-form}
  \partial_t \mathbf{w} = \nabla \cdot \left(B(\mathbf{w}) \nabla \mathbf{w}\right),
\end{equation}
where $\mathbf{w} = (u,v)$ and $\nabla \mathbf{w} = (\nabla u, \nabla
v)$ and
  \begin{equation}
    B(u,v) = (b_{ij}(u,v)) =  \frac{1}{(u+v)^2}
    \begin{bmatrix}
      -uv & u^2\\
      -(1+\gamma)v^2 & (1+\gamma) uv
    \end{bmatrix}.
    \label{bmatrix}
  \end{equation}

\subsection{Steady State Solutions}
\label{sec:ss-solns}

As was previously observed in~\cite{deforest2013}, strictly positive steady state solutions are characterized by the simple condition
$u=cv$.
When $u$ and $v$ are smooth positive functions on
$\overline{\Omega}$, then 
we have the following:

\begin{thm}
  Let $u$ and $v$ be strictly positive functions in
  $C^1\left(\overline{\Omega}\right)$.  Then $(u,v)$ is a
  steady state solution of \eqref{eq:reduced-form-PDE} if and only if $u
  = cv$, where
  \begin{equation*}
c = \frac{\norm{u}_{L^1(\Omega)}}{\norm{v}_{L^1(\Omega)}}.
\end{equation*}
\end{thm}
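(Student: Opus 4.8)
The plan is to prove the two implications separately, reading the steady-state condition through the divergence-form (weak) formulation so that the stated $C^1$ regularity suffices. Recall from \eqref{eq:fitness-gradient} that, with the normalization $a_{11}-a_{12}=1$,
\[
  \nabla f = \frac{1}{(u+v)^2}\bigl(v\nabla u - u\nabla v\bigr),
\]
and that $f = f_1(u,v) = (a_{11}u + a_{12}v)/(u+v)$.

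For the easy direction, suppose $u = cv$ for a constant $c>0$. Then $f = (a_{11}c + a_{12})/(c+1)$ is constant on $\Omega$, so $\nabla f \equiv 0$; hence the fluxes $u\nabla f$ and $(1+\gamma)v\nabla f$ vanish identically and $\partial_t u = \partial_t v = 0$, so $(u,v)$ is a steady state. Integrating $u = cv$ over $\Omega$ gives $\norm{u}_{L^1(\Omega)} = c\norm{v}_{L^1(\Omega)}$, which pins down $c$ as claimed.

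For the converse, suppose $(u,v)$ is a (weak) steady state. Since $u,v\in C^1(\overline{\Omega})$ and $u+v$ is bounded below by a positive constant on the compact set $\overline{\Omega}$, we have $f\in C^1(\overline{\Omega})\subset H^1(\Omega)$. The weak form of the first equation together with its no-flux boundary condition reads $\int_\Omega u\,\nabla f\cdot\nabla\phi\,dx = 0$ for all $\phi\in H^1(\Omega)$; taking $\phi = f$ yields
\[
  \int_\Omega u\,\abs{\nabla f}^2\,dx = 0 .
\]
Because $u$ attains a strictly positive minimum on $\overline{\Omega}$, this forces $\nabla f \equiv 0$ on $\Omega$. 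Then the displayed formula for $\nabla f$ together with $(u+v)^2>0$ gives $v\nabla u - u\nabla v \equiv 0$, i.e. $\nabla(u/v) = (v\nabla u - u\nabla v)/v^2 \equiv 0$. Since $\Omega$ is connected, $u/v$ is a constant $c$, necessarily positive as $u,v>0$; integrating $u = cv$ over $\Omega$ identifies $c = \norm{u}_{L^1(\Omega)}/\norm{v}_{L^1(\Omega)}$.

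The only real obstacle is a matter of formulation rather than of substance: at the stated $C^1$ regularity the divergence $\nabla\cdot(u\nabla f)$ need not exist in the classical sense, so ``steady state'' must be interpreted via the weak identity $\int_\Omega u\,\nabla f\cdot\nabla\phi\,dx = 0$, which also builds in the no-flux condition as a natural boundary condition. With that reading the argument is elementary; it is worth remarking that only the $u$-equation is actually used, since once $u=cv$ the $v$-equation holds automatically, and that connectedness of $\Omega$ is invoked (as assumed throughout).
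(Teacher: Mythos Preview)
Your proof is correct, but it takes a genuinely different route from the paper's. The paper argues pointwise: from $\nabla\cdot(u\nabla f)=0$ and $\nabla\cdot(v\nabla f)=0$ it expands to get $-\nabla u\cdot\nabla f = u\Delta f$ and $-\nabla v\cdot\nabla f = v\Delta f$, divides by $u$ and $v$, and subtracts to obtain $(v\nabla u - u\nabla v)\cdot\nabla f = 0$; substituting the explicit formula for $\nabla f$ then yields $|v\nabla u - u\nabla v|^2/(u+v)^2 = 0$ pointwise. You instead use an integral (energy) argument, testing the weak form of the $u$-equation against $\phi=f$ to get $\int_\Omega u\,|\nabla f|^2\,dx=0$ directly.

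Your approach buys two things. First, it is consistent with the stated $C^1$ hypothesis: the paper's expansion invokes $\Delta f$, which tacitly needs $u,v\in C^2$, whereas your weak formulation makes sense with only first derivatives and absorbs the no-flux condition naturally. Second, as you observe, you use only the $u$-equation, while the paper uses both equations to eliminate $\Delta f$. The paper's argument, on the other hand, is purely local and avoids any appeal to a variational interpretation of ``steady state.'' Your remark that the regularity issue is one of formulation rather than substance is apt: under $C^2$ the two arguments are essentially equivalent reorganizations of the same identity.
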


\begin{proof}  
Given $u = cv$, then using $\ln u = \ln (cv)$ one has
\begin{equation*}
  \frac{\nabla u}{u} = \frac{\nabla v}{v},
\end{equation*}
or
\begin{equation*}
  v\nabla u -u\nabla v = 0.
\end{equation*}
From \eqref{eq:fitness-gradient} this implies $\nabla f=\mathbf{0}$  and thus
$\partial_t u = \partial_t v = 0$.  Note that this also implies
that the fitness function $f$ is constant.  

Conversely,  if $(u,v)$ is a steady state solution, then
\begin{align*}
  \nabla\cdot (u\nabla f) & =0,\\
  (1+\gamma)\nabla\cdot (v\nabla f) & = 0.\\
\end{align*}
This implies
\begin{align*}
  -\nabla u \cdot\nabla f & = u \Delta f,\\
  -\nabla v \cdot\nabla f & = v\Delta f.\\
\end{align*}
Consequently,
\begin{equation}\label{steady-state-proof}
  -\frac{\nabla u}{u}\cdot\nabla f  = \Delta f = -\frac{\nabla v}{v}
  \cdot \nabla f,
\end{equation}
and 
\begin{equation}
\label{grad-u-u-proof}
  \left(v\nabla u - u\nabla v\right)\cdot\nabla f = 0.
\end{equation}
Recalling the value of $\nabla f$ from \eqref{eq:fitness-gradient}, we have
\begin{equation*}
  \frac{1}{(u+v)^2}\abs{v\nabla u - u\nabla v}^2 = 0. 
\end{equation*}
Since by our assumptions $\frac{1}{(u+v)^2}>0$, we
conclude
\begin{equation*}
  v\nabla u -u\nabla v = 0.
\end{equation*}
Equivalently,
\begin{equation*}
  \nabla \ln u = \nabla \ln v,
\end{equation*}
which implies $\ln u = \ln (cv)$ and $u=cv$.
Since
\begin{equation*}
  \int_\Omega u dx = \int_\Omega cv dx,
\end{equation*}
with $u,v>0$, it is easy to see that $c$ will be the ratio of the $L^1$ norms. 

\end{proof}

\bigskip

\subsection{Weak Steady State Solutions}
If $\Omega$ is an interval, we can define continuous weak steady state solutions
in $H^1(\Omega)$.  As shown above, when $u=cv$, then we have $\partial_xf=0$; equivalently $f$ constant.
This is a local condition; it possible that $f$ is only
piece-wise constant. If  $u$ and $v$ are to be continuous, we must have
$u=v=0$ at points of discontinuity of  $f$.

For example, suppose that $\Omega$ is partitioned
into two disjoint intervals: $\Omega = (x_0,x_1]\cup (x_1,x_2) =
I_1\cup I_2$ and  let $u = c_1v$ on $I_1$ and $u=c_2v$
on $I_2$.  If each $u$ and $v$ are  to be continuous we must have
$u=v=0$ at the adjoining endpoint $x_1$.  In this case, we have
$\partial_x f=0$ everywhere except at $x_1$ (where $\partial_x f$ is
not defined).  The resulting $(u,v)$ is a weak steady state solution of \eqref{eq:reduced-form-PDE}.

\begin{figure}[h]
  \centering\label{fig:uv1}
  \begin{tikzpicture}
    \begin{axis}[xmin=0,xmax=2,
      ymin=0,ymax=4]
      \addplot[ultra thick,domain=0:1] {-x^2+1};
      \addplot[dashed, ultra thick, domain=0:1] {1.5*(-x^2+1)};
      \addplot[ultra thick,domain=1:2] {-2*((x-2)^2-1)};
      \addplot[dashed, ultra thick, domain=1:2] {-2.5*((x-2)^2-1)};
    \end{axis}
  \end{tikzpicture}
  \caption{$u$ and $v$, $v = 1.5u$ on $(0,1)$ and $v=1.25u$ on (1,2)}
\end{figure}
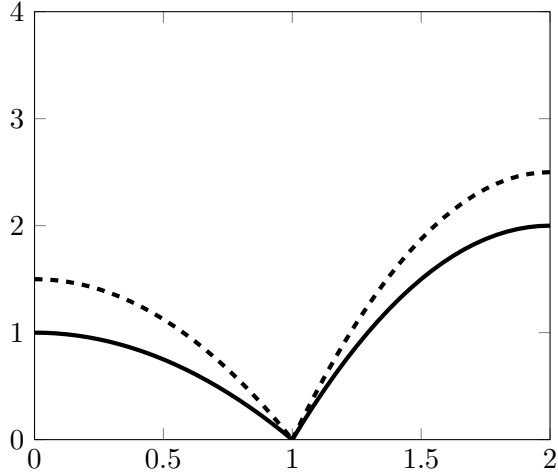

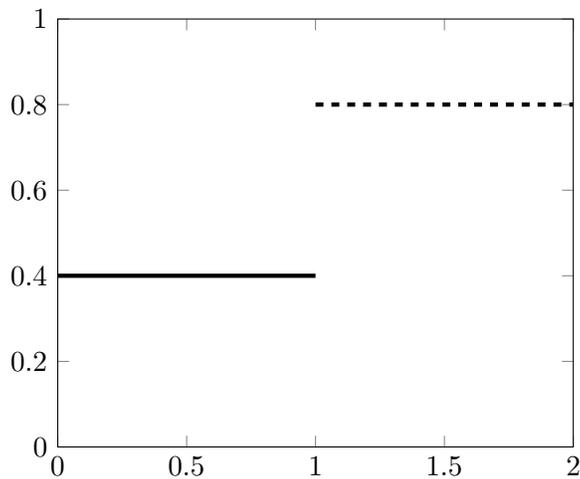
\begin{figure}[h]
  \centering\label{fig:piecewise-fitness}
  \begin{tikzpicture}
    \begin{axis}[xmin=0,xmax=2,
      ymin=0,ymax=1]
      \addplot[ultra thick,domain=0:1] {(1/2.5)+0*x};
      \addplot[dashed, ultra thick, domain=1:2] {2/2.5+0*x};
    \end{axis}
  \end{tikzpicture}
  \caption{Piecewise-constant $f$}
\end{figure}

If we partition $\Omega$ into a set of disjoint
intervals, then $f$ may have a different constant value on each
interval, with $u=v=0$ at the adjoining
endpoints. An example is shown in Figures~\ref{fig:uv1} and \ref{fig:piecewise-fitness}.

\begin{defn}\label{defn:weak-soln-two-populations}
   For functions $u$ and $v$ in $H^1(\Omega)$, $(u,v)$ is a weak
   steady state solution of \eqref{eq:reduced-form-PDE} if, for every
   pair of smooth test functions $\phi,\psi\in C_0^\infty(\Omega)$,
   \begin{equation}
     \label{weak-solution}
     \begin{aligned}
       \int_\Omega \phi_x \cdot( uf_x) dx = 0,\\
       \int_\Omega  \psi_x \cdot (vf_x) dx = 0.
     \end{aligned}
   \end{equation}

\end{defn}

\begin{thm}\label{thm:weak-solution}
  Let $\Omega$ be a bounded open interval in $\R$.  Let $v\in
  H^1(\Omega)$, with $v\geq 0$ and such that $v=0$ at no more than a finite
  number of points $x_k\in\Omega$.  This set of zeros partitions
  $\Omega$ into a finite collection of disjoint intervals $I_k$. 
   
  Construct a function $u$ as follows. For
  each interval $I_k$, let $c_{k}$ be a
  nonnegative  constant and take $u_=c_{k}v$ on $I_k$.  Then $(u,v)$
  is a weak steady state solution of \eqref{eq:reduced-form-PDE}.
\end{thm}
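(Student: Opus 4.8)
The plan is to verify Definition~\ref{defn:weak-soln-two-populations} directly, by showing that the two fluxes $uf_x$ and $vf_x$ — read off from the explicit formula \eqref{eq:fitness-gradient} — vanish almost everywhere on $\Omega$, so that both integral identities hold trivially for every pair of test functions. First I would record the structural facts: since $v\in H^1(\Omega)$ and $\Omega\subset\R$, $v$ has a continuous representative on $\overline{\Omega}$, so its zero set in $\Omega$ is closed and, by hypothesis, finite, say $\{x_1<\dots<x_m\}$; these points split $\Omega$ into finitely many open subintervals $I_k$, on each of which $v>0$. Setting $u=c_k v$ on $I_k$, the one-sided limits of $u$ at each $x_j$ are $c_j v(x_j)=0$ from both sides, so $u$ admits a continuous representative on $\overline{\Omega}$ with $u(x_j)=0$.

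Next I would check that $u\in H^1(\Omega)$: $u$ is continuous and piecewise $H^1$ with finitely many pieces, and a one-line integration-by-parts across each $x_j$ (the boundary terms cancel precisely because $u$ has no jump there) identifies the distributional derivative of $u$ with the piecewise derivative, which equals $c_k v_x$ a.e.\ on $I_k$ and satisfies $\abs{u_x}\le(\max_k c_k)\abs{v_x}\in L^2(\Omega)$. On each $I_k$ we have $u+v=(1+c_k)v>0$, hence $f=f_1(u,v)=\frac{a_{11}c_k+a_{12}}{1+c_k}$ is constant there and $f_x=0$; equivalently, from \eqref{eq:fitness-gradient} (normalized so $a_{11}-a_{12}=1$),
\begin{equation*}
 u f_x=\frac{u}{(u+v)^2}\bigl(v u_x-u v_x\bigr)=\frac{c_k v}{(1+c_k)^2 v^2}\bigl(v\,c_k v_x-c_k v\,v_x\bigr)=0
\end{equation*}
a.e.\ on $I_k$, and the same computation gives $v f_x=0$ a.e.\ on $I_k$. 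Since $\bigcup_k I_k=\Omega\setminus\{x_1,\dots,x_m\}$ has full measure, $u f_x=v f_x=0$ a.e.\ on $\Omega$; these are bona fide $L^1(\Omega)$ functions (the values of $f$ on the finitely many pieces being finite), so for every $\phi,\psi\in C_0^\infty(\Omega)$ we get $\int_\Omega\phi_x(uf_x)\,dx=\int_\Omega\psi_x(vf_x)\,dx=0$, which is exactly the requirement of Definition~\ref{defn:weak-soln-two-populations}.

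Since the argument is ultimately a verification, there is no deep obstacle; the one point that genuinely needs care is the behavior near the zeros of $v$, where $u+v$ degenerates. There $f$ itself need not extend continuously across $x_j$ — it can jump between its piecewise-constant values — so ``$f_x$'' must be understood through the cancelled expression $\frac{u(v u_x-u v_x)}{(u+v)^2}$ for the flux rather than as ``$u$ times the distributional derivative of a piecewise-constant function'' (which would introduce spurious Dirac masses). Making this precise amounts to observing that $\{u+v=0\}\subseteq\{x_1,\dots,x_m\}$ is a null set, so the a.e.-defined flux is the correct object; this, together with the $H^1$-gluing of $u$ across the finitely many zeros, is essentially the whole content of the proof.
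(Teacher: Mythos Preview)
Your argument is correct and is in fact tighter than the paper's own proof. The paper argues by excising an $\epsilon$-neighborhood of each zero $x_j$, using $\nabla f=0$ on the complement, and then integrating by parts over $(x_j-\epsilon,x_j+\epsilon)$ to move the derivative off the piecewise-constant (hence bounded but discontinuous) $f$; the surviving integral is bounded by $\epsilon C\norm{u}_{H^1}\norm{\phi}_{H^1}$ and sent to zero. You instead read the flux directly through the formula \eqref{eq:fitness-gradient}, observe that $v u_x-u v_x=0$ a.e.\ on each $I_k$, and conclude that $u f_x$ and $v f_x$ are identically zero as $L^1$ functions on $\Omega\setminus\{x_j\}$, which has full measure. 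This bypasses the $\epsilon$-limit entirely and avoids the somewhat delicate step of integrating by parts across a point where $f$ jumps. Your approach also makes explicit why $u\in H^1(\Omega)$ (the paper takes this for granted), and your closing remark about not confusing the a.e.\ flux with ``$u$ times the distributional derivative of a step function'' is exactly the conceptual point the paper's Remark after the proof is gesturing at. The only thing the paper's route buys is that it works verbatim if one insists on treating $f$ as a standalone function rather than passing through the quotient formula; your route trades that for a one-line computation.
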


\begin{proof}
  By construction for each $k$, we have $\nabla f =0$ on each
  interval $I_k$.  Since $u$ and $v$ are each in $H^1(\Omega)$, we can
  take $u$ and $v$ to be
  absolutely continuous. Therefore $u(x_k)=v(x_k)=0$.

Take $\phi\in C_0^\infty(\Omega)$ and consider
\begin{equation*}
  \int_\Omega \phi \nabla \cdot (u\nabla f)dx = -\int_\Omega
  \nabla \phi \cdot (u \nabla f) dx
\end{equation*}
Although $u\in H^1(\Omega)$, the function $f$ under our
assumptions is piecewise constant and consequently is not  in $H^1(\Omega)$.  Suppose that $f$ has one point of
discontinuity at  $x_1\in\Omega = (x_0,x_2)$, as in \ref{fig:piecewise-fitness} and
consider a small open interval around this point $B(x_1, \epsilon)$. Then  
\begin{equation*}
  \begin{aligned}
    -\int_\Omega \nabla \phi \cdot(u\nabla f) dx &=
    -\int_{x_0}^{x_1-\epsilon} \nabla \phi \cdot(u \nabla f) dx -
    \int_{x_1+\epsilon}^{x_2} \nabla \phi \cdot (u \nabla f) dx \\
    &\quad
    -\int_{x_1-\epsilon}^{x_1+\epsilon} \nabla \phi\cdot (u \nabla f)
    dx.\\
  \end{aligned}
\end{equation*}
By assumption,  $\nabla f = 0$ on $(x_0,x_1-\epsilon)$ and
$(x_1+\epsilon,x_2)$. Thus we have 
\begin{equation*}
  \begin{aligned}
  -\int_{x_1-\epsilon}^{x_1+\epsilon} \nabla \phi\cdot (u \nabla f)
  dx & =-\int \nabla f\cdot (u\nabla \phi)dx \\
  & = \int_{x_1-\epsilon}^{x_1+\epsilon} \nabla \cdot (u \nabla
    \phi)  f dx -
    (u\nabla\phi f)(x_1+\epsilon) + (u\nabla\phi f)(x_1-\epsilon)\\
  & = \int_{x_1-\epsilon}^{x_1+\epsilon} u \Delta \phi f + \nabla
  u \cdot\nabla \phi f dx -
    (u\nabla\phi f)(x_1+\epsilon) + (u\nabla\phi f)(x_1-\epsilon)
  \end{aligned}
\end{equation*}
Note that $f(u)$ is not defined when $u=v =0$, but $f$ is
bounded as $(u,v)\to (0,0)$.  Since $u$ and $v$ are in $\in H_1(\Omega)$, we have, for some
constant $C$,
\begin{equation*}
  \begin{aligned}
   \abs{ \int_\Omega \nabla \phi (u\nabla f) dx} & \leq \epsilon C
    \norm{u}_{H^1(\Omega)}\norm{\phi}_{H^1(\Omega)} \to 0 \text{ as
    } \epsilon \to 0.
  \end{aligned}
\end{equation*}
Similarly, we have  $\int_\Omega \phi \nabla \cdot (v\nabla f)dx = 0$
so that $(u,v)$ is a weak solution of \eqref{eq:reduced-form-PDE}. 
\end{proof}

\begin{remark}
  Although the piecewise constant function $f$ does not have a weak
  derivative, its distributional derivative is a delta function (or a finite set
  of delta functions in the general case). Integrating the function
  $u$ against $\nabla
  f$ thus gives us the value $u(x_1)$, which by our assumptions is
  zero. Thus we see (again) that it is essential that the function $u=0$ at
  each point $x_k$ where the fitness $f$ is discontinuous.
\end{remark}

\begin{remark}
  For the model in \eqref{eq:FFPDE1} below, describing two
  populations, numerical simulations have shown that for some initial
  conditions, the system evolves to such weak steady state solutions.
  A  'pinching off' occurs, where each population reaches zero at a
  point in $\Omega$. The populations then redistribute themselves on
  the remaining  subintervals, until reaching a configuration where
  $\nabla f=0$ on each subinterval.
\end{remark}

\smallskip

\begin{remark}
  The consideration of weak steady state solutions reveals two 
  short-comings in our model. First, there is no law of motion for the
  population $u$ in the absence of $v$, or vice versa (since the fitness is constant in that case).  If $u_0=0$ on
  some subinterval $I\subset \Omega$, then $(u_0,v_0)$ will be a weak
  steady state solution to \eqref{eq:reduced-form-PDE}, provided $u_0$ and $v_0$ 
  are in $H^1(\Omega)$, satisfy the Neumann boundary condition, and 
  \begin{equation}
  \label{eq:pathological-example}
    u_0 = cv_0  \text{ in } \Omega\setminus I.\\
\end{equation}
That is, $v_0$ can be arbitrarily
  chosen on the subinterval $I$ where $u_0\equiv0$.

Second, our fitness derives from an evolutionary game, which
is inherently a mean-field  model.  The evolutionary game approach
assumes large well-mixed populations, but these assumptions break down
when $u+v\ll 1$. An improved model would require multiple scales, where
the mean-field approach dominates when $u$ and $v$ are large, while
dynamics for individual interactions are brought into play when $u$ and
$v$ are near zero.

In future work we may consider alterations to these models that
address these shortcomings.
\end{remark}

%
%


\subsection{Linearization around a steady state}
\label{sec:line-around-steady-1}

We next study solutions for a linearization of the fitness-flux PDE in
the case of two populations, and where $\Omega$ is an
interval. Solutions are of the form
$$
\mathbf{w}(x,t) = (u(x,t)-u_0(x),v(x,t)-v_0(x)),
$$ 
where $(u_0,v_0)$ is a smooth (strictly positive),  steady state solution to \eqref{eq:FFPDE1}.
We show that this steady state is neutrally. While perturbations from the steady state remain bounded, they do not decay but tend toward a new steady state near $(u_0,v_0)$ in the $L^2$ sense.

We choose $\Omega$ to be the interval $(0,2\pi)$; $u_0$ and $v_0$ are strictly positive, and $u_0 = cv_0$ with $c= \norm{u_0} / \norm{v_0}$ as before. 
We consider the linearization 
\begin{equation*}
  \begin{aligned}
    \partial_t w_1 &= \nabla \cdot \left[\partial_uP(u_0,v_0)w_1
      + \partial_v P(u_0,v_0)w_2\right]\\
      \partial_t w_2 & = \nabla \cdot \left[\partial_u
        Q(u_0,v_0)w_1+ \partial_v(u_0,v_0)w_2\right],
  \end{aligned}
\end{equation*}
Where
\begin{equation*}
  \begin{aligned}
    P(u,v) &= \nabla \cdot [b_{11}(u,v)\nabla u + b_{12}(u,v) \nabla
    v]\\
    Q(u,v) & = \nabla \cdot [b_{21}(u,v)\nabla v + b_{22}(u,v)\nabla
    v],
  \end{aligned}
\end{equation*}

\begin{equation*}
  \begin{aligned}
    \partial_u P(u,v) w_1 & = \nabla \cdot
    \left[\frac{uv-v^2}{(u+v)^3}w_1\nabla u - \frac{uv}{(u+v)^2}\nabla
    w_1 + \frac{2uv}{(u+v)^3}w_1\nabla v\right]\\
  \partial_v P(u,v) w_2 & = \nabla \cdot
  \left[\frac{uv-u^2}{(u+v)^3}w_2\nabla u + \frac{u^2}{(u+v)^2}\nabla
    w_2 - \frac{2u^2}{(u+v)^3}w_2 \nabla v\right]\\
  \partial_u Q(u,v)w_1 & = (1+\gamma)\nabla \cdot
  \left[\frac{2v^2}{(u+v)^3}w_1\nabla u - \frac{v^2}{(u+v)^2}\nabla
    w_1 + \frac{v^2-uv}{(u+v)^3}w_1\nabla v\right]\\
  \partial_v Q(u,v) w_2 & = (1+\gamma)\nabla \cdot
  \left[\frac{-2uv}{(u+v)^3}w_2\nabla u +\frac{uv}{(u+v)^2}\nabla w_2
    +\frac{u^2-uv}{(u+v)^3}w_2\nabla v \right]
  \end{aligned}
\end{equation*}
Evaluating the above at the steady state solution $(cv_0,v_0)$, we
arrive at the following linearized PDE:
\begin{equation}
  \label{eq:linear-ffpde}
  \begin{cases}
    \partial_t w_1 &= \frac{1}{(c+1)^2}\nabla \cdot
    \left[(c w_1-c^2w_2)\frac{\nabla v_0}{v_0} -c\nabla w_1 +c^2 \nabla
      w_2\right]\\
    \partial_t w_2 &= \frac{(1+\gamma)}{(c+1)^2}\nabla
    \cdot \left[(w_1-cw_2)\frac{\nabla v_0}{v_0}  -\nabla w_1+c\nabla w_2 \right]
\end{cases}
\end{equation}
which can be written  as 
\begin{equation*}
    \begin{aligned}
      \partial_t \mathbf{w} = K\nabla \cdot \left(B\nabla \mathbf{w} -
        \frac{\nabla v_0}{v_0}B \mathbf{w} \right)
    \end{aligned}
  \end{equation*}
for
\begin{equation*}
K = \frac{1}{(c+1)^2},\quad  \mathbf{w} = (w_1,w_2),\quad \text{ and } \quad B =
  \begin{pmatrix}
    - c & c^2\\ -(1+\gamma) & (1+\gamma)c
  \end{pmatrix}
\end{equation*}

The eigenvalues of $B$ are $\lambda_0 = 0$ and $\lambda_1
= \gamma K c>0$, with corresponding eigenvectors
\begin{equation*}
  \mathbf{e}_0 =
  \begin{bmatrix}
    c\\1
  \end{bmatrix}
, \quad \mathbf{e}_1 =
\begin{bmatrix}
  \frac{c}{1+\gamma}\\1
\end{bmatrix}.
\end{equation*}

Using the eigenvectors
given above, we can decompose $\mathbf{w}$ as
\begin{equation*}
  \begin{aligned}
    \mathbf{w}(x,t) = c_0(x,t)\mathbf{e}_0  + c_1(x,t)\mathbf{e}_1, 
  \end{aligned}
\end{equation*}
where $c_0(x,t)\mathbf{e}_0$ is in the eigenspace associated with
$\lambda_0$ and, hence remains constant in time, while $c_1(x,t)$ will
evolve according to the linear PDE shown below (see equation~\eqref{eq:decoupled-pde}).
Solving this system gives
\begin{equation}
  \label{eq:1}
  \begin{aligned}
    c_0(x,t) &= \frac{1+\gamma}{c\gamma}w_1(x,t)-\frac{1}{\gamma}w_2(x,t), 
    \\
    c_1 (x,t) & = - \frac{1+\gamma}{c\gamma}w_1(x,t)+\frac{1+\gamma}{\gamma}w_2(x,t). 
  \end{aligned}
\end{equation}
Given an initial condition $\mathbf{w_0}(x,0) = (w_1,w_2)$, the
function $c_0(x,t)\mathbf{e}_0 = c_0(x,0)\mathbf{e}_0$ is constant in time. Writing this
as $\mathbf{y}_0(x) = (y_0^1,y_0^2)$, we have $y_0^1 = cy_0^2$, as we expect.

Writing $\textbf{y}_1(x,t) = c_1(x,t)\mathbf{e}_1$, we see that $\partial_t
y_1^1 = \frac{c}{1+\gamma}\partial_t y_1^2$.  Thus, we can reduce the
problem to the single linear partial differential equation,

\begin{equation}
  \label{eq:decoupled-pde}
    \partial_t w = \alpha \Delta w - a(x)\nabla w - b(x)w, \quad \text{where } \quad \alpha =
    K\gamma c>0,
\end{equation}
and
\begin{equation*}
a(x) = \alpha\frac{\nabla v_0}{v_0}, \quad b(x)
= \alpha\left(\frac{\Delta v_0}{v_0} - \frac{\abs{\nabla v_0}^2}{v_0^2}\right).
\end{equation*}
A perturbation attains a new steady state.  We illustrate an example
in Figure~\ref{fig:linearization-sim2} in the next section.



We can also see the instability in the linearization by investigating
a  dispersion relation.  We assume the solution
takes the form $w(x,t) = e^{i(\mathbf{k}\cdot \mathbf{x}+\omega t)}$, with $\omega\in \mathbb{C}$ and
$\mathbf{k}$ and $\mathbf{x}$ in 
$\R^n$. Plugging this into
\eqref{eq:decoupled-pde} gives the dispersion relation
\begin{equation*}
  \frac{i \omega}{\alpha} = -\abs{\mathbf{k}}^2 -i\mathbf{k}\cdot a(x) - b(x).
\end{equation*}
Thus, the real part of $i\omega = -\alpha \abs{\mathbf{k}}^2-b(x)$.
Since $b(x)$ is not positive in general, the modes for for small wave-numbers,
may grow on some parts of the domain $\Omega$, while
for sufficiently large $\abs{k}$,  the associated modes will
decay.

\section{Discussion and Numerical Examples}
\label{sec:numerics}

To illustrate and provide insight into the results presented above, we next discuss several numerical examples in 1D. We discuss the numerical methods in Section \ref{methods}.   In Sections \ref{solution-example} - \ref{perturbation-example}, we examine the transient and perturbation dynamics of steady states. In \ref{weak-soln-example} we show several examples evolving to a
weak steady state, with piecewise constant fitness. We conclude by
demonstrating cross-diffusive instabilities and the onset of pattern formation
produced when fitness gradient flux is included in a Lotka--Volterra type
population model (Section \ref{sec:lv}).

\subsection{Numerical Methods}
\label{methods}
We use an implicit numerical scheme by Newton iteration with a no-flux
boundary condition; the discretization uses a first order upwinding scheme, 
necessary for simulating examples that evolve toward a weak steady
state solution.

For the one-dimensional case, the PDE system \eqref{eq:reduced-form-PDE}
can be written as
\begin{align*}
  u_t &= -u_xf_x - uf_{xx},\\
  v_t &= -(1+\gamma)v_xf_x - (1+\gamma)vf_{xx}.
\end{align*}
We describe the first-order upwinding discretization for $u_t$. The sign of $f_x$ determines whether we use
a forward or backward difference in the discretization of $u_x$: we
use a backward difference when $f_x>0$ and a forward
difference when $f_x<0$ \cite{Lui2011}.

Let $f_{x,i}^+$ and $f_{x,i}^-$ denote the forward and backward
difference operators at $x_i$,
\begin{equation*}
  f_{x,i}^+ = \frac{1}{\delta x}\left(f_{i+1}-f_i\right), \quad
  f_{x,i}^- = \frac{1}{\delta x}\left(f_i-f_{i-1}\right), 
\end{equation*}
where $f_i$ denotes $f(x_i)$ (we are supressing the time variable $t$).

The first order central difference for $f_{xx}$ can be computed as
\begin{align*}
  f_{xx,i} & = \frac{1}{\delta x} \left(f_{x,i}^+-f_{x,i}^-\right).
\end{align*}
If we use $f_{x,i}^-$ whenever $f_{x,i}>0$ and
$f_{x,i}^+$ whenever $f_{x,i}<0$, then we have the discretization
\begin{align*}
  \partial_t u_i  =
  \begin{cases}
    -u_{i+1}f_{x,i}^+ + u_i f_{x,i}^-, & f_{x,i}<0,\\
    -u_if_{x,i}^+ + u_{i-1}f_{x,i}^-, & f_{x,i}>0,
  \end{cases}
\end{align*}
which we combine into
\begin{equation*}
  \partial_t u_i = -u_if_{x,i}^+[f_{x,i}^+\geq 0] -
  u_{i+1}f_{x,i}^+[f_{x,i}^+<0] + u_{i-1}f_{x,i}^-[f_{x,i}^-\geq 0] + u_if_{x,i}^-[f_{x,i}^-<0],
\end{equation*}
where
\begin{equation*}
  [g\geq 0] =
  \begin{cases}
    1, & \text{if } g\geq 0,\\
    0, & \text{otherwise}.
  \end{cases}
\end{equation*}
\begin{remark}
  If $f_{x,i}^+$ and $f_{x,i}^-$ differ in sign for some $x_i$, then
  the above discritization treats $f_{x,i}= 0$.
\end{remark}

For the examples shown below, we use a uniform mesh size $\delta x =
0.005$, and $\delta t = 0.001$ on the domain $[0,1]$, with $J=201$
gridpoints.  Refining the mesh and
reducing the time step ($\delta x=0.001,J=1001, \delta t = 10^{-5}$)
does not produce a significant difference in the results.
 
\subsection{Evolution toward steady state solutions}
\label{solution-example}

Beginning from arbitrary but smooth initial conditions, a typical solution exhibits two distinct phases in its dynamics. First, the populations quickly reach a configuration where local extrema of $u$ and $v$ are aligned with one another, as well as  with
the local extrema of the fitness function $f$.
Once aligned, the local maxima of $u$ and $v$ grow while their local
minima decrease, but at a decreasing rate as $\nabla f\to 0$, and the 
solution approaches a steady state.

\begin{figure}[t!]
  \centering
  \setlength{\figwidth}{.43\textwidth}
  \input{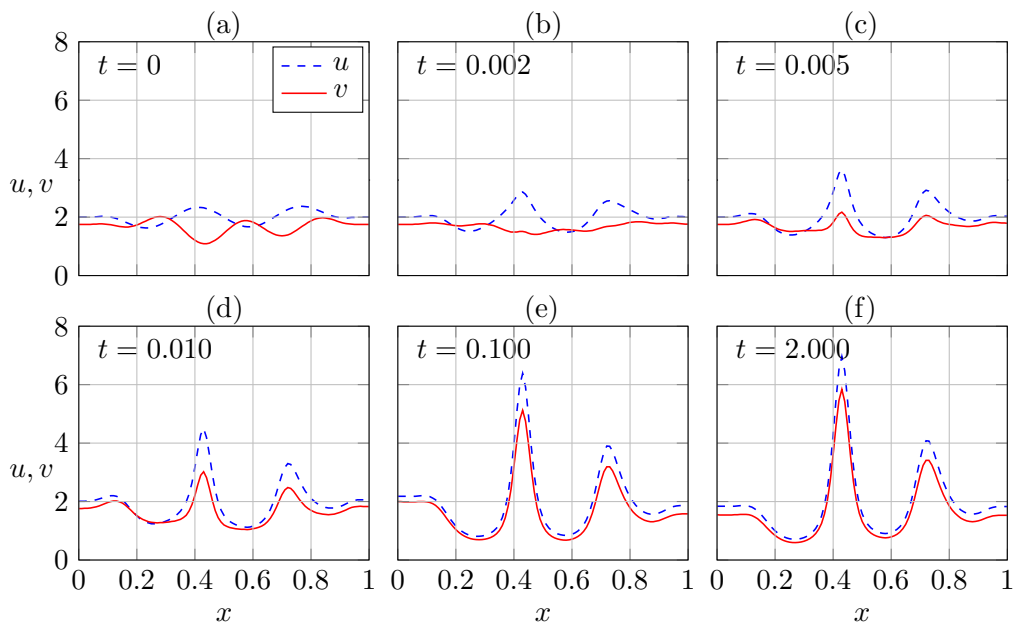}

  \caption{Evolution of a solution to \protect \eqref{eq:reduced-form-PDE} 
  toward steady state. The extrema for $u$ and $v$ are 
  almost perfectly aligned by the tenth iteration (d), $t=0.010$.  The
  local 
  maxima grow at a
  decreasing rate as the solution approaches a steady state in (e), where 
  $u=cv$. At steady state (f), the fitness function $f(u,v)$ is constant 
  throughout $\Omega$. Solutions shown are for $\gamma = 0.5$.}
\label{solution1}
\end{figure}

\begin{figure}[t]
  \centering
  \setlength{\figwidth}{.7\textwidth}
%
%
%
%
\begin{tikzpicture}

\begin{axis}[
xlabel={$x$},
ylabel={$u,v$},
ylabel style={rotate=-90},
axis y line*=left,
xmin=0, xmax=1,
ymin=0, ymax=5,
axis on top,
width=\figwidth,
xmajorgrids,
ymajorgrids,
legend entries={{$u$},{$v$},{$f$}}
]
\addplot [semithick, blue, dashed]
coordinates {
(0,2.01)
(0.005,2.01)
(0.01,2.01)
(0.015,2.01)
(0.02,2.01)
(0.025,2.02)
(0.03,2.02)
(0.035,2.02)
(0.04,2.03)
(0.045,2.03)
(0.05,2.04)
(0.055,2.05)
(0.06,2.06)
(0.065,2.07)
(0.07,2.09)
(0.075,2.1)
(0.08,2.12)
(0.085,2.14)
(0.09,2.15)
(0.095,2.17)
(0.1,2.18)
(0.105,2.19)
(0.11,2.2)
(0.115,2.2)
(0.12,2.2)
(0.125,2.19)
(0.13,2.18)
(0.135,2.16)
(0.14,2.13)
(0.145,2.09)
(0.15,2.05)
(0.155,2.01)
(0.16,1.96)
(0.165,1.9)
(0.17,1.84)
(0.175,1.78)
(0.18,1.72)
(0.185,1.67)
(0.19,1.61)
(0.195,1.56)
(0.2,1.51)
(0.205,1.46)
(0.21,1.42)
(0.215,1.38)
(0.22,1.35)
(0.225,1.33)
(0.23,1.3)
(0.235,1.28)
(0.24,1.27)
(0.245,1.25)
(0.25,1.24)
(0.255,1.24)
(0.26,1.24)
(0.265,1.24)
(0.27,1.25)
(0.275,1.25)
(0.28,1.26)
(0.285,1.28)
(0.29,1.29)
(0.295,1.31)
(0.3,1.33)
(0.305,1.35)
(0.31,1.37)
(0.315,1.4)
(0.32,1.43)
(0.325,1.46)
(0.33,1.49)
(0.335,1.53)
(0.34,1.57)
(0.345,1.62)
(0.35,1.67)
(0.355,1.73)
(0.36,1.81)
(0.365,1.89)
(0.37,1.99)
(0.375,2.11)
(0.38,2.26)
(0.385,2.44)
(0.39,2.65)
(0.395,2.9)
(0.4,3.18)
(0.405,3.47)
(0.41,3.77)
(0.415,4.06)
(0.42,4.29)
(0.425,4.45)
(0.43,4.5)
(0.435,4.4)
(0.44,4.19)
(0.445,3.92)
(0.45,3.6)
(0.455,3.27)
(0.46,2.94)
(0.465,2.65)
(0.47,2.39)
(0.475,2.17)
(0.48,1.98)
(0.485,1.84)
(0.49,1.71)
(0.495,1.62)
(0.5,1.53)
(0.505,1.47)
(0.51,1.41)
(0.515,1.36)
(0.52,1.32)
(0.525,1.29)
(0.53,1.26)
(0.535,1.23)
(0.54,1.21)
(0.545,1.19)
(0.55,1.17)
(0.555,1.16)
(0.56,1.15)
(0.565,1.14)
(0.57,1.13)
(0.575,1.13)
(0.58,1.12)
(0.585,1.13)
(0.59,1.13)
(0.595,1.14)
(0.6,1.16)
(0.605,1.17)
(0.61,1.19)
(0.615,1.21)
(0.62,1.24)
(0.625,1.27)
(0.63,1.31)
(0.635,1.36)
(0.64,1.41)
(0.645,1.47)
(0.65,1.55)
(0.655,1.63)
(0.66,1.73)
(0.665,1.84)
(0.67,1.97)
(0.675,2.12)
(0.68,2.27)
(0.685,2.44)
(0.69,2.61)
(0.695,2.77)
(0.7,2.93)
(0.705,3.06)
(0.71,3.17)
(0.715,3.25)
(0.72,3.3)
(0.725,3.3)
(0.73,3.27)
(0.735,3.21)
(0.74,3.13)
(0.745,3.03)
(0.75,2.93)
(0.755,2.83)
(0.76,2.72)
(0.765,2.62)
(0.77,2.53)
(0.775,2.44)
(0.78,2.36)
(0.785,2.29)
(0.79,2.22)
(0.795,2.16)
(0.8,2.11)
(0.805,2.06)
(0.81,2.02)
(0.815,1.98)
(0.82,1.94)
(0.825,1.91)
(0.83,1.88)
(0.835,1.86)
(0.84,1.84)
(0.845,1.82)
(0.85,1.8)
(0.855,1.79)
(0.86,1.78)
(0.865,1.78)
(0.87,1.78)
(0.875,1.78)
(0.88,1.78)
(0.885,1.79)
(0.89,1.8)
(0.895,1.81)
(0.9,1.83)
(0.905,1.85)
(0.91,1.87)
(0.915,1.9)
(0.92,1.92)
(0.925,1.95)
(0.93,1.97)
(0.935,1.99)
(0.94,2.01)
(0.945,2.03)
(0.95,2.04)
(0.955,2.05)
(0.96,2.06)
(0.965,2.06)
(0.97,2.06)
(0.975,2.06)
(0.98,2.06)
(0.985,2.06)
(0.99,2.06)
(0.995,2.06)
(1,2.06)

};
\addplot [semithick, red]
coordinates {
(0,1.76)
(0.005,1.76)
(0.01,1.76)
(0.015,1.77)
(0.02,1.77)
(0.025,1.77)
(0.03,1.77)
(0.035,1.78)
(0.04,1.78)
(0.045,1.79)
(0.05,1.8)
(0.055,1.81)
(0.06,1.82)
(0.065,1.84)
(0.07,1.85)
(0.075,1.87)
(0.08,1.89)
(0.085,1.91)
(0.09,1.93)
(0.095,1.95)
(0.1,1.97)
(0.105,1.99)
(0.11,2)
(0.115,2.01)
(0.12,2.02)
(0.125,2.02)
(0.13,2.02)
(0.135,2.01)
(0.14,1.99)
(0.145,1.97)
(0.15,1.95)
(0.155,1.91)
(0.16,1.88)
(0.165,1.83)
(0.17,1.79)
(0.175,1.74)
(0.18,1.7)
(0.185,1.65)
(0.19,1.6)
(0.195,1.56)
(0.2,1.52)
(0.205,1.48)
(0.21,1.45)
(0.215,1.42)
(0.22,1.39)
(0.225,1.37)
(0.23,1.35)
(0.235,1.33)
(0.24,1.31)
(0.245,1.3)
(0.25,1.29)
(0.255,1.28)
(0.26,1.28)
(0.265,1.28)
(0.27,1.27)
(0.275,1.28)
(0.28,1.28)
(0.285,1.28)
(0.29,1.28)
(0.295,1.29)
(0.3,1.29)
(0.305,1.3)
(0.31,1.3)
(0.315,1.31)
(0.32,1.32)
(0.325,1.33)
(0.33,1.33)
(0.335,1.35)
(0.34,1.36)
(0.345,1.37)
(0.35,1.39)
(0.355,1.42)
(0.36,1.45)
(0.365,1.49)
(0.37,1.53)
(0.375,1.6)
(0.38,1.68)
(0.385,1.78)
(0.39,1.9)
(0.395,2.05)
(0.4,2.21)
(0.405,2.4)
(0.41,2.58)
(0.415,2.75)
(0.42,2.9)
(0.425,2.99)
(0.43,3.02)
(0.435,2.96)
(0.44,2.82)
(0.445,2.65)
(0.45,2.45)
(0.455,2.24)
(0.46,2.04)
(0.465,1.85)
(0.47,1.69)
(0.475,1.56)
(0.48,1.45)
(0.485,1.36)
(0.49,1.3)
(0.495,1.24)
(0.5,1.2)
(0.505,1.17)
(0.51,1.15)
(0.515,1.13)
(0.52,1.11)
(0.525,1.1)
(0.53,1.09)
(0.535,1.08)
(0.54,1.07)
(0.545,1.06)
(0.55,1.06)
(0.555,1.05)
(0.56,1.05)
(0.565,1.05)
(0.57,1.05)
(0.575,1.04)
(0.58,1.04)
(0.585,1.05)
(0.59,1.05)
(0.595,1.06)
(0.6,1.07)
(0.605,1.08)
(0.61,1.09)
(0.615,1.1)
(0.62,1.12)
(0.625,1.14)
(0.63,1.16)
(0.635,1.19)
(0.64,1.22)
(0.645,1.26)
(0.65,1.31)
(0.655,1.36)
(0.66,1.43)
(0.665,1.5)
(0.67,1.59)
(0.675,1.69)
(0.68,1.79)
(0.685,1.91)
(0.69,2.02)
(0.695,2.13)
(0.7,2.24)
(0.705,2.33)
(0.71,2.4)
(0.715,2.46)
(0.72,2.48)
(0.725,2.48)
(0.73,2.45)
(0.735,2.41)
(0.74,2.35)
(0.745,2.28)
(0.75,2.21)
(0.755,2.14)
(0.76,2.07)
(0.765,2.01)
(0.77,1.95)
(0.775,1.89)
(0.78,1.84)
(0.785,1.8)
(0.79,1.76)
(0.795,1.73)
(0.8,1.7)
(0.805,1.67)
(0.81,1.65)
(0.815,1.63)
(0.82,1.61)
(0.825,1.6)
(0.83,1.59)
(0.835,1.58)
(0.84,1.57)
(0.845,1.56)
(0.85,1.56)
(0.855,1.55)
(0.86,1.55)
(0.865,1.56)
(0.87,1.56)
(0.875,1.56)
(0.88,1.57)
(0.885,1.58)
(0.89,1.6)
(0.895,1.61)
(0.9,1.63)
(0.905,1.65)
(0.91,1.67)
(0.915,1.69)
(0.92,1.71)
(0.925,1.74)
(0.93,1.76)
(0.935,1.78)
(0.94,1.8)
(0.945,1.81)
(0.95,1.82)
(0.955,1.83)
(0.96,1.83)
(0.965,1.84)
(0.97,1.84)
(0.975,1.83)
(0.98,1.83)
(0.985,1.83)
(0.99,1.83)
(0.995,1.83)
(1,1.83)

};
\addplot [semithick, black, dashdotted]
coordinates {
(0,-1)
(1,-1)

};
\path [draw=black, fill opacity=0] (axis cs:13,5)--(axis cs:13,5);

\path [draw=black, fill opacity=0] (axis cs:1,13)--(axis cs:1,13);

\path [draw=black, fill opacity=0] (axis cs:13,0)--(axis cs:13,0);

\path [draw=black, fill opacity=0] (axis cs:0,13)--(axis cs:0,13);

\node at (axis cs:0.02,4.7)[
  anchor=west,
  text=black,
  rotate=0.0
]{$t=0.010$};
\end{axis}

\begin{axis}[
ylabel={$f$},
axis y line*=right,
yticklabel style={
        /pgf/number format/fixed,
        /pgf/number format/precision=2
},
scaled y ticks=false,
ylabel style= {rotate=-90},
xmin=0, xmax=1,
ymin=-0.05, ymax=0.2,
axis on top,
width=\figwidth,
]
\addplot [semithick, black,dashdotted]
coordinates {
(0,0.0655)
(0.005,0.0654)
(0.01,0.0654)
(0.015,0.0652)
(0.02,0.065)
(0.025,0.0648)
(0.03,0.0645)
(0.035,0.0641)
(0.04,0.0637)
(0.045,0.0631)
(0.05,0.0626)
(0.055,0.0619)
(0.06,0.0611)
(0.065,0.0602)
(0.07,0.0593)
(0.075,0.0582)
(0.08,0.057)
(0.085,0.0556)
(0.09,0.0542)
(0.095,0.0526)
(0.1,0.0509)
(0.105,0.0491)
(0.11,0.0471)
(0.115,0.045)
(0.12,0.0428)
(0.125,0.0404)
(0.13,0.038)
(0.135,0.0354)
(0.14,0.0327)
(0.145,0.0298)
(0.15,0.0269)
(0.155,0.0239)
(0.16,0.0208)
(0.165,0.0176)
(0.17,0.0145)
(0.175,0.0112)
(0.18,0.00803)
(0.185,0.00486)
(0.19,0.00175)
(0.195,-0.00126)
(0.2,-0.00413)
(0.205,-0.00682)
(0.21,-0.0093)
(0.215,-0.0115)
(0.22,-0.0135)
(0.225,-0.0151)
(0.23,-0.0164)
(0.235,-0.0173)
(0.24,-0.0178)
(0.245,-0.0178)
(0.25,-0.0175)
(0.255,-0.0166)
(0.26,-0.0153)
(0.265,-0.0135)
(0.27,-0.0111)
(0.275,-0.00833)
(0.28,-0.00501)
(0.285,-0.00118)
(0.29,0.00315)
(0.295,0.00799)
(0.3,0.0133)
(0.305,0.0191)
(0.31,0.0254)
(0.315,0.0322)
(0.32,0.0394)
(0.325,0.0471)
(0.33,0.0551)
(0.335,0.0636)
(0.34,0.0724)
(0.345,0.0815)
(0.35,0.0909)
(0.355,0.1)
(0.36,0.11)
(0.365,0.12)
(0.37,0.13)
(0.375,0.139)
(0.38,0.148)
(0.385,0.157)
(0.39,0.165)
(0.395,0.172)
(0.4,0.178)
(0.405,0.184)
(0.41,0.188)
(0.415,0.191)
(0.42,0.194)
(0.425,0.195)
(0.43,0.196)
(0.435,0.196)
(0.44,0.195)
(0.445,0.193)
(0.45,0.191)
(0.455,0.187)
(0.46,0.183)
(0.465,0.177)
(0.47,0.171)
(0.475,0.163)
(0.48,0.156)
(0.485,0.147)
(0.49,0.139)
(0.495,0.13)
(0.5,0.121)
(0.505,0.112)
(0.51,0.104)
(0.515,0.0957)
(0.52,0.088)
(0.525,0.0807)
(0.53,0.0738)
(0.535,0.0675)
(0.54,0.0618)
(0.545,0.0566)
(0.55,0.052)
(0.555,0.048)
(0.56,0.0446)
(0.565,0.0419)
(0.57,0.0398)
(0.575,0.0383)
(0.58,0.0375)
(0.585,0.0373)
(0.59,0.0378)
(0.595,0.0388)
(0.6,0.0405)
(0.605,0.0428)
(0.61,0.0456)
(0.615,0.0489)
(0.62,0.0528)
(0.625,0.0571)
(0.63,0.0619)
(0.635,0.067)
(0.64,0.0724)
(0.645,0.0781)
(0.65,0.0839)
(0.655,0.0898)
(0.66,0.0957)
(0.665,0.102)
(0.67,0.107)
(0.675,0.112)
(0.68,0.117)
(0.685,0.122)
(0.69,0.126)
(0.695,0.13)
(0.7,0.133)
(0.705,0.136)
(0.71,0.138)
(0.715,0.14)
(0.72,0.141)
(0.725,0.142)
(0.73,0.143)
(0.735,0.143)
(0.74,0.142)
(0.745,0.141)
(0.75,0.14)
(0.755,0.138)
(0.76,0.136)
(0.765,0.133)
(0.77,0.13)
(0.775,0.127)
(0.78,0.123)
(0.785,0.12)
(0.79,0.116)
(0.795,0.112)
(0.8,0.108)
(0.805,0.104)
(0.81,0.1)
(0.815,0.0964)
(0.82,0.0927)
(0.825,0.0891)
(0.83,0.0856)
(0.835,0.0823)
(0.84,0.0792)
(0.845,0.0763)
(0.85,0.0736)
(0.855,0.0711)
(0.86,0.0688)
(0.865,0.0668)
(0.87,0.0649)
(0.875,0.0633)
(0.88,0.0619)
(0.885,0.0607)
(0.89,0.0597)
(0.895,0.0588)
(0.9,0.0582)
(0.905,0.0576)
(0.91,0.0573)
(0.915,0.057)
(0.92,0.0569)
(0.925,0.0568)
(0.93,0.0569)
(0.935,0.057)
(0.94,0.0571)
(0.945,0.0573)
(0.95,0.0576)
(0.955,0.0578)
(0.96,0.0581)
(0.965,0.0583)
(0.97,0.0585)
(0.975,0.0588)
(0.98,0.0589)
(0.985,0.0591)
(0.99,0.0592)
(0.995,0.0593)
(1,0.0594)

};
\path [draw=black, fill opacity=0] (axis cs:13,0.2)--(axis cs:13,0.2);

\path [draw=black, fill opacity=0] (axis cs:1,13)--(axis cs:1,13);

\path [draw=black, fill opacity=0] (axis cs:13,-0.05)--(axis cs:13,-0.05);

\path [draw=black, fill opacity=0] (axis cs:0,13)--(axis cs:0,13);

\end{axis}

\end{tikzpicture}



  \caption{The plot from  Figure~\ref{solution1}d
    compared with the fitness $f(u,v)$.
 The extrema of $u$, $v$,
    and $f$ are
    nearly aligned.  Notice that $f$ has a different scale than $u$
    and $v$. In particular, $f$ falls below zero over part of the
    domain.}
  \label{fig:aligned-extrema}
\end{figure}
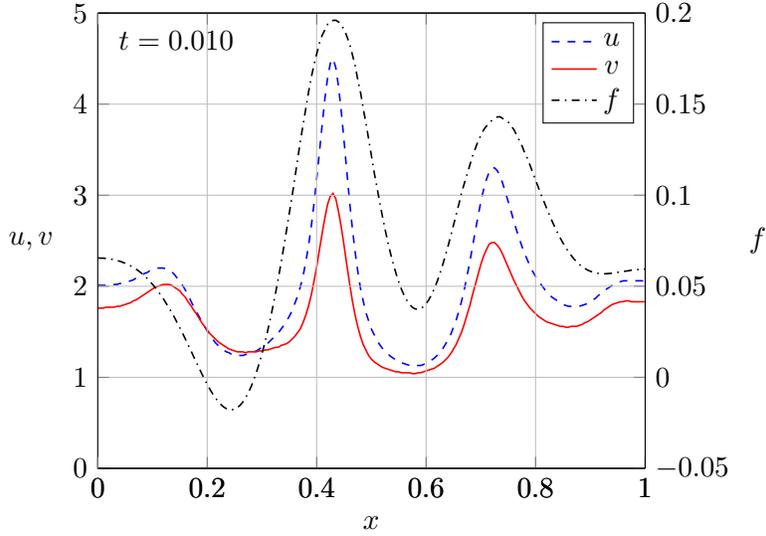

We can understand this dynamic as follows. Suppose that at time $t$, 
$u$ and $v$ each have a local maximum at a point $x^*\in \Omega$. Since $\nabla u= \nabla v=0$ at the point $(x^*,t)\in \Omega\times (0,\infty)$, the solution to \eqref{eq:reduced-form-PDE} locally obeys
\begin{align*}
  u_t(x^*,t) & = -u(x^*,t)\Delta f(x^*,t),\\
  v_t(x^*,t) & =-(1+\gamma)v(x^*,t) \Delta f(x^*,t).
\end{align*}
If $f(\cdot,t)$ also has a local maximum at $x^*$ and is such that
$\Delta f(x^*,t)\leq \Delta f(x,t)$ for $x$ in a neighborhood of
$x^*$, (for example if $f$ is well approximated by a quadratic in the
vicinity of its maxmimum), then
\begin{align*}
  u_t(x^*,t) & > u_t(x,t)\geq 0,\\
  v_t(x^*,t) & > v_t(x,t)\geq 0,
\end{align*}
 for $x$ near $x^*$.  The rate of increase at $x^*$ is greater than at
 nearby points, and the local maxima of $u$ and $v$ at $x^*$ remain at
$x^*$ at a later time $t+\delta t$.

Figure~\ref{solution1} demonstrates this typical evolution toward
a steady state, with  $\gamma = 0.5$.  The
initial conditions are
\begin{equation}
  \label{eq:example1-ic}
  \begin{aligned}
    u(x,0) &= 2 + \left(\frac{1}{5}\cos(3\pi x) + \frac{1}{2}\cos(5\pi
      x)\right)\exp\left(-\frac{(x-\tfrac{1}{2})^2}{x(1-x)}\right),\\
   v(x,0) & = 1.75 + \left(\frac{3}{10}\cos(2\pi x) + \frac{2}{5}
     \cos(4\pi x)\right)\exp \left(-\frac{(x-\tfrac{1}{2})^2}{x(1-x)}\right).
  \end{aligned}
\end{equation}
The factor $\exp\left(-\frac{(x-\tfrac{1}{2})^2}{x(1-x)}\right)$ is included to de-emphasize the role of the boundary, while satisfying the Neumann conditions. 
Early in the simulation ($t=0.010$, Figure~\ref{solution1}d), the
local extrema of $u$ and $v$ are aligned with one another, and
also aligned with the local extrema of the fitness function $f$ (see
Figure~\ref{fig:aligned-extrema}).  Evolution then progresses
asymptotically toward a steady
state where the fitness $f$ is constant and $u(x,t)=cv(x,t)$ throughout the domain $\Omega$. During this second phase, the aligned
maxima are increasing with time, while the aligned minima are
decreasing, but at a decreasing rate as the steady state is approached.

We have observed that for some initial conditions, $u$ and $v$ appear to 
actually reach
zero pointwise before the steady state is achieved (in finite time), leading
to the development of weak steady state solutions. These initial conditions seem to correlate with 
$uv\ll \abs{\nabla f}^2$ in some region of $\Omega$. In particular, by decreasing the
initial conditions in \eqref{eq:example1-ic} by a constant, we seem to be able to produce a weak steady state solution (see Figure~\ref{fig:comparison}).
Several examples are included in Section~\ref{weak-soln-example} below.


\subsection{Perturbation from Steady State}
\label{perturbation-example}

\begin{figure}[t]
  \centering
  \setlength{\figwidth}{.43\textwidth}
  \input{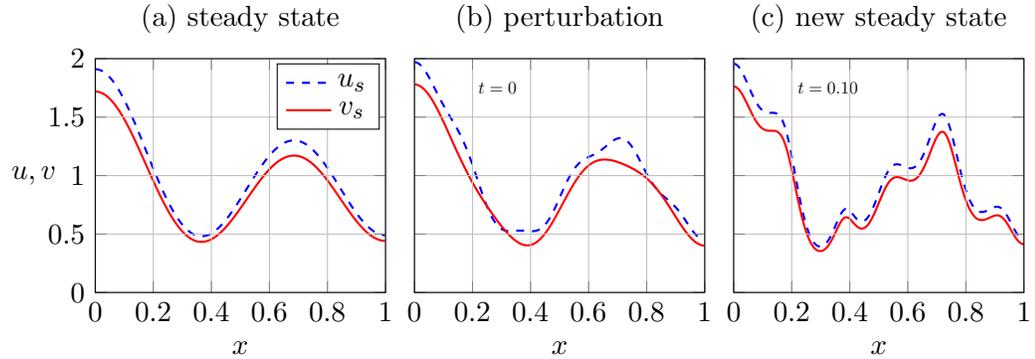} 
  \caption{(a) A steady state solution $(u_s,v_s)$ given by
    \eqref{eq:ss-solution}; (b) perturbation of $(u_s,v_s)$ given by
    \eqref{eq:perturbation-u}; (c) Evolution
    of perturbed problem to nearby steady state $(u_s',v_s')$.}
\label{perturbation}
\end{figure}


Figure~\ref{perturbation} illustrates the instability of 
smooth, strictly positive steady states; Figure~\ref{perturbation}a
shows a steady state solution $(u_s,v_s)$, where  
\begin{equation}
  \label{eq:ss-solution}
  \begin{aligned}
    u_s & = 1 + \frac{1}{5}\cos(\pi x)+\frac{1}{5}\cos(2\pi
    x)+\frac{1}{2}\cos(3\pi x) + \frac{1}{100}\cos(5\pi x),\\
    v_s &= \frac{9}{10}u_s.
  \end{aligned}
\end{equation}
This steady state is  perturbed at
$t=0$ (Figure~\ref{perturbation}b), 
\begin{equation}
  \label{eq:perturbation-u}
  \begin{aligned}
    u(x,0) & = u_s + \frac{1}{100}\cos(4\pi x)+\frac{1}{20}\cos(11\pi x),\\
    v(x,0) & = v_s + \frac{1}{100}\cos(2\pi x)+ \frac{1}{20}\cos(7\pi
    x).
  \end{aligned}
\end{equation}
With this perturbation as the initial condition, the solution to
\eqref{eq:reduced-form-PDE} evolves to a nearby steady state,
$(u_s',v_s')$, shown in Figure~\ref{perturbation}(c).

In Figure~\ref{fig:linearization-sim2} we show a simulation
for the linearization around a steady state $(u_s,v_s)$, where 
\begin{equation*}
  v_s(x) = \cos(2x)\exp(-(\pi-x)^2)+2, \quad \text{ and } \quad u_s(x) = 2v_s(x).
\end{equation*}
This steady-state $(u_s,v_s)$ is shown by the dashed plots in the
figure. The solid blue and black plots show the initial conditions for a perturbation $\mathbf{w}(x,0) =
(w_1(x,0),w_2(x,0))$, given by
\begin{equation*}
  w_1(x,0) = 0.5\cos(x) - 0.75\cos(4.5x), \quad w_2(x,0) = 0.5
  \cos(1.5x) + 0.3\cos(2.5x).
\end{equation*}
Computing the decomposition $\textbf{w}(x,t) = c_0(x)\mathbf{e}_0 +
c_1(x,t)\mathbf{e}_1$, we simulate a solution to the linearization
\eqref{eq:decoupled-pde} using the initial condition $c_1(x,0)$. The
plot for the final time (a steady state for the linearization) is also
shown (the green and orange plots in the figure).
\begin{equation*}
  w_1(x,T) = u_s(x)+cc_0(x) + \frac{c}{1+\gamma}c_1(x,T), \quad
  w_2(x,T) = v_s(x)+c_0(x) + c_1(x,T).
\end{equation*}

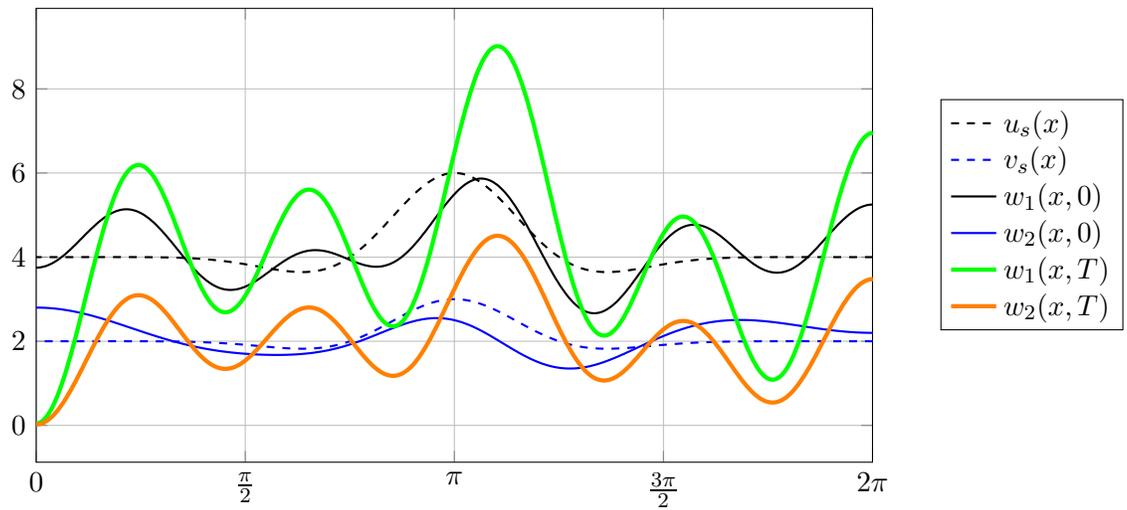
\begin{figure}
  \centering
  \begin{tikzpicture}
    \begin{axis}[width=5in,height=3in,grid=both,
      xmin=0,xmax=6.28,
      xtick={0,1.57,3.14,4.71,6.28},
      xticklabels = {0,$\frac{\pi}{2}$, $\pi$, $\frac{3\pi}{2}$,$2\pi$},
      legend style={at={(1.3,.8)}},
      legend cell align={left}]
      \addplot[black, thick,dashed] table [x=x, y=us, col sep=comma]
      {linear2.dat};
      \addplot[blue, thick,dashed] table [x=x, y=vs, col sep=comma]
      {linear2.dat};
      \addplot[black,thick,] table [x=x, y=w10, col sep=comma]
      {linear2.dat};
      \addplot[blue, thick,] table [x=x, y=w20, col sep=comma]
      {linear2.dat};

      \addplot[green, ultra thick] table [x=x, y=w1f, col sep=comma]
      {linear2.dat};
      \addplot[orange,ultra thick ] table [x=x, y=w2f, col sep=comma]
      {linear2.dat};
      \legend{$u_s(x)$,$v_s(x)$, ${w_1(x,0)}$, ${w_2(x,0)}$,
        ${w_1(x,T)}$, ${w_2(x,T)}$}

\end{axis}
\end{tikzpicture}
\caption{The dashed plots depict a steady state solution $(u_s,v_s)$,
  where $u_s(x) = 2v_s(x)$. The functions $w_1(x,0)$ and $w_2(x,0)$
  denote a perturbation of this steady state, which evolves according
  to the linearization given by \eqref{eq:linear-ffpde}. The
  functions $w_1(x,T)$ and $w_2(x,T)$ denote the steady-state that
  this perturbation evolves to. \label{fig:linearization-sim2}}
\end{figure}


\subsection{Evolution toward weak steady state solutions}
\label{weak-soln-example}

\begin{figure}[t]
  \centering
  \setlength{\figwidth}{.35\textwidth}
  \input{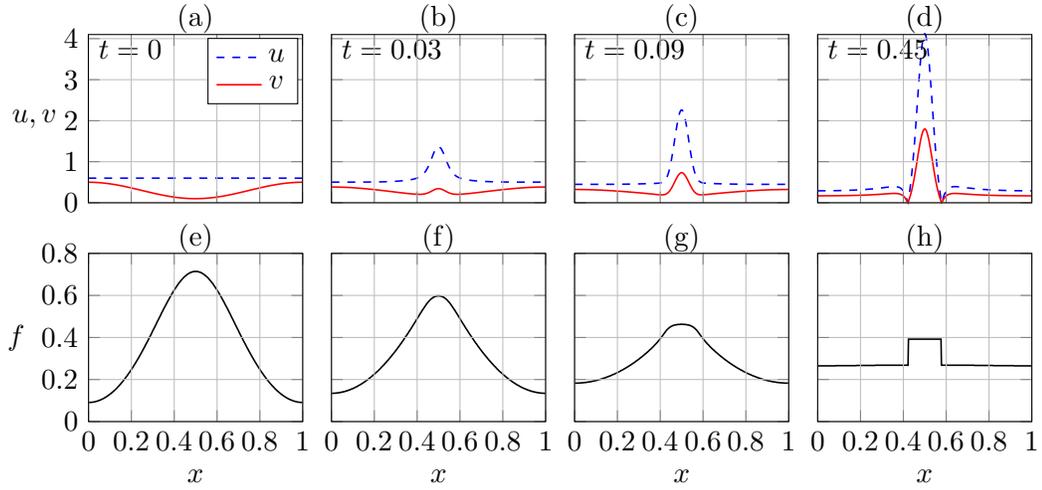}
  \caption{Evolution toward non-smooth steady-state (Example 1).  The
    fitness $f$ approaches a piecewise constant function.}
  \label{fig:weaksoln}
\end{figure}

As noted in Section~\ref{sec:ss-solns}, given certain initial
conditions, a solution of \eqref{eq:reduced-form-PDE} may evolve to a weak
steady state solution. These solutions are continuous but not
smooth, and the corresponding fitness function $f$ becomes
piecewise constant in the steady state.  We present several numerical examples.

It will be helpful to first discuss the implicit dynamics of $f$, the fitness
function for $u$.  Recall from Section~\ref{sec:recast} that $f$ depends only on
$u$ and $v$,
\begin{equation*}
  f(u,v) = \frac{a_{11}u+a_{12}v}{(u,v)},
\end{equation*}
where we have assumed that $a_{11}-a_{12}=1$.
We will use the notation $f(x,t)$ to refer to
$f(u(x,t),v(x,t))$.  Given a solution $(u,v)$ to
\eqref{eq:reduced-form-PDE}, notice that
\begin{equation}\label{eq:fitness-PDE}
f_t  = \frac{1}{(u+v)^2}(vu_t - uv_t),
\end{equation}
from which we can obtain
\begin{equation}
f_t  = \frac{1}{(u+v)^2}\left[\gamma uv\Delta f -
  \abs{\nabla f}^2 + \gamma u\nabla v\cdot \nabla f\right].  \label{eq:fitness-PDE} 
\end{equation}
  If $u$ and $v$ are smooth strictly positive solutions, then the
  coefficient on $\Delta f$ in \eqref{eq:fitness-PDE} is positive, and
  it is clear from the maximum principle that $f$ attains its maximum
  and minimum values on the parabolic boundary, $\partial\Omega \times
  [t=0].$

\bigskip

\paragraph{Weak solution - Example 1.}
\label{ex:weaksoln1}

Let $(u(x,t), v(x,t))$ be a weak solution to \eqref{eq:reduced-form-PDE},
with the initial conditions 
\begin{equation}
  \label{eq:weaksoln-ic}
u(x,0) = \frac{3}{5},\quad v(x,0) =
\frac{1}{5}\cos(2\pi x) + \frac{3}{10}. 
\end{equation}
In Figure~\ref{fig:weaksoln}, the top row depicts the evolution of $u(x,t)$ and
$v(x,t)$ (Fig.~\ref{fig:weaksoln}a-d), while the bottom row depicts $f$, the
fitness of population $u$ (Fig.~\ref{fig:weaksoln}e-h). The initial conditions 
were chosen so that the fitness $f$ would have a single peak and no
interior minima, and such that
$v(x,0)$ is nearly zero over part of the domain.  Notice in
Figure~\ref{fig:weaksoln}c and g that the local minima of $u$ and $v$
are nearly aligned.  In the vicinity of these local minima we have the following conditions
\begin{enumerate}
\item $f_{xx}>0$,
\item $\abs{\nabla f}^2 >1$,
\item the product $uv\ll 1$.
\end{enumerate}
If $u(\cdot,t)$ has a local minimum at $x^*$, then
at the point $(x^*,t)$ we have
\begin{equation}
\label{eq:PDE-at-min}
  u_t(x^*,t) = -u(x^*,t)\Delta f(x^*,t).
\end{equation}
From the convexity of $f$, it is clear that $u$ and $v$ are decreasing in the vicinity
of the local minima.  If also, $\abs{\nabla f}^2> \gamma uv \Delta f$, then we
see from \eqref{eq:fitness-PDE} that $f$ will also be decreasing.
As a result, in the vicinity of the local minima of $u$ and $v$, both
$\Delta f$ and $\abs{\nabla f}$ are increasing.
Notice how this differs from the case where local minima of $u$ and
$v$ are aligned with a local minimum of  $f$.  We have
\begin{equation*}
  \Delta f = \frac{v\Delta u - u \Delta v}{(u+v)^3}>0, 
\end{equation*}
with $\Delta u$ and $\Delta v$ increasing.  Therefore
\begin{equation*}
  \Delta f \sim \frac{u \Delta v}{v^3},
\end{equation*}
If we assume that  $\Delta v$ is not decreasing and that $u\sim v$, we
conclude that $\Delta f \sim u^{-2}$ in the vicinity of $x^*$.

As long as the local minimum for $u$ remains at $x^*$ as $t$ increases,  then
\begin{equation*}
  u_t(x^*,t) = -u(x^*,t)\Delta f(x^*,t) \sim -u^{-1}(x^*,t)
\end{equation*}
Thus we expect $u \sim \sqrt{C-t}$ for some constant $C>0$, which
implies $u(x^*,t)$ goes to zero in finite time.

The dynamic here is one in which both $u$ and $v$ locally sense a high
fitness gradient, and their response has the effect of increasing this
gradient, thus accelerating the rate at which the densities $u$
and $v$ locally approach zero.  

\bigskip

\paragraph{Weak solution - Example 2.}
\label{ex:weaksoln2}

\begin{figure}[t]
  \centering
  \setlength{\figwidth}{.5\textwidth}
  \input{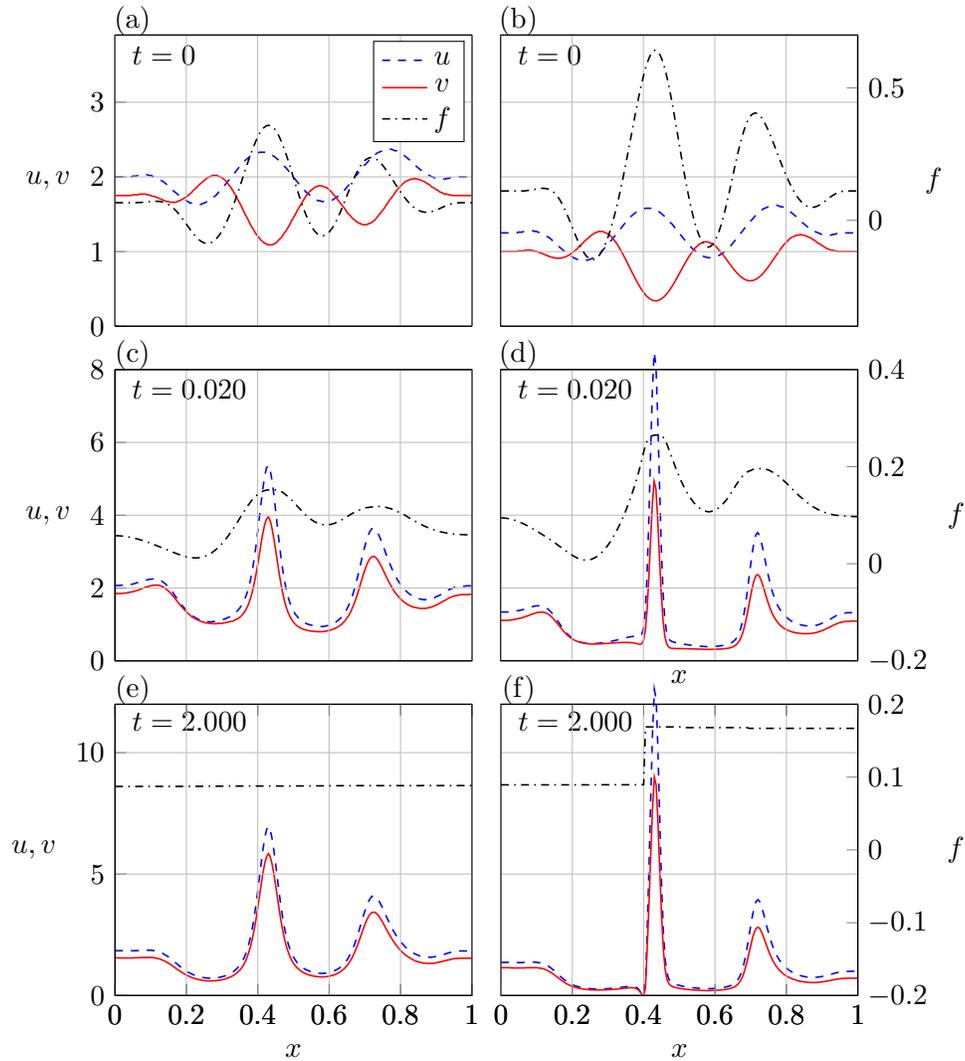}
   \caption{Comparison for two solutions whose initial conditions differ
    by a constant. In each plot scaling for densities $u$ and
    $v$, is indicated on the left, while scaling for $f(u,v)$ is
    indicated on the right. Left column shows evolution for (a) initial conditions $u_0,v_0$ from
    \eqref{eq:example1-ic}.  The right column shows evolution for (b) initial conditions are
    $u_0-\tfrac{3}{4},v_0-\tfrac{3}{4}$.  
    This produces large gradients in the
    fitness, along with regions where the product $uv\ll 1$ leading to
    a weak solution (see text).
  }
  \label{fig:comparison}
\end{figure}

In our second example, we modify the initial
conditions \eqref{eq:example1-ic} from the example in
Section~\ref{solution-example}, subtracting the constant $1$ from
each initial condition (see Figure~\ref{fig:comparison}(a),(b)).  This changes the
relative values of $u$ and $v$, thus altering the fitness profile and setting up the condition, $uv\ll 1$ on part of the domain, that leads to
weak solutions 
(Figure~\ref{fig:comparison}(c),(d)).  In the steady state
(Figure~\ref{fig:comparison}(e),(f)), the fitness profile for the modified
problem is piecewise constant. 

\bigskip

\paragraph{Weak solution - Example 3.}
\label{ex:weak-soln3}

\begin{figure}[t]
  \centering
  \setlength{\figwidth}{.55\textwidth}
  \input{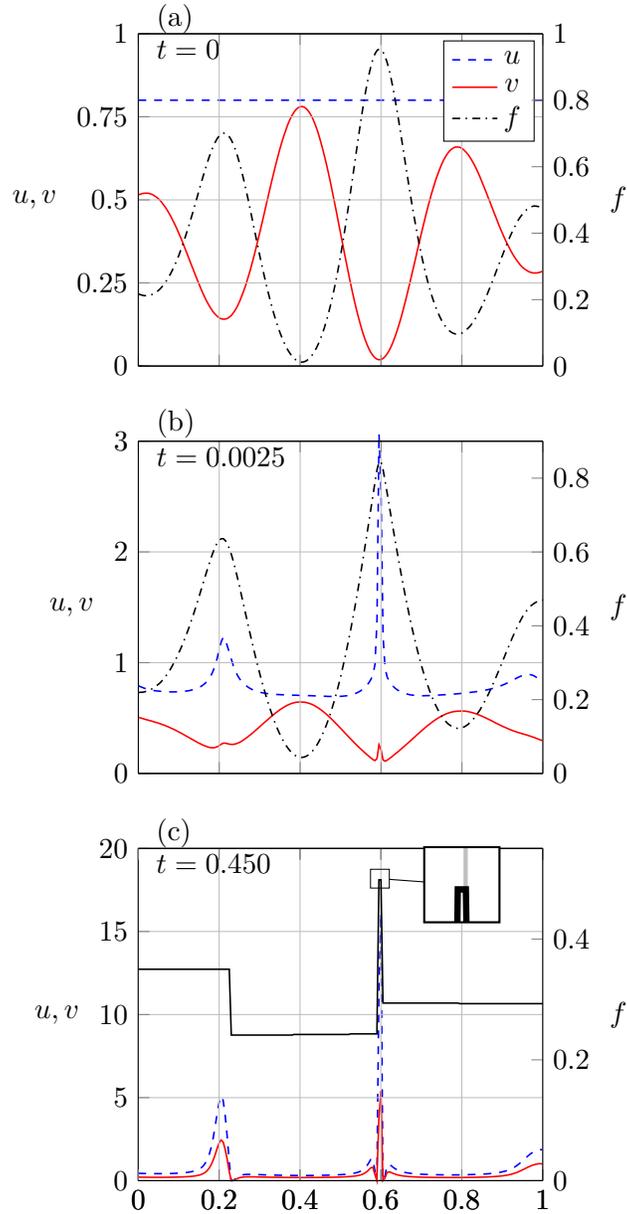}
  \caption{Example 3: (a) initial conditions that produce large
    gradients in the
    fitness and regions where $uv\ll 1$ (near $x=0.6$), leading to a weak steady
    state solution with a fitness that is piecewise constant in the
    steady state (c). Note that the fitness is constant on a small
    interval around $x=0.6$ (see inset).
    }
  \label{fig:weak-soln3}
\end{figure}

Our third weak solution is shown in Figure~\ref{fig:weak-soln3}, with 
initial conditions
\begin{equation}
  \label{eq:weak-soln3-ic}
  \begin{aligned}
    u(x,0) &= \frac{3}{5},\\
    v(x,0) & = \frac{2}{5}+\left(\frac{2}{5}\cos(5\pi x)\right)\exp\left(-\frac{(x-\tfrac{1}{2})^2}{x(1-x)}\right).
  \end{aligned}
\end{equation}
Notice from Figure~\ref{fig:weak-soln3}(a),(d) that at time $t=0$, the
local minima of $v$ correspond to local maxima of $f$, at the points
$x^*$ and $y^*$ in the figure. The local maxima of $f$ drive
aggegration of $u$ and $v$ in the vicinity of $x^*$ and $y^*$
(Figure~\ref{fig:weak-soln3}(b)), which in turn leads to  local
minima in $u$ and $v$ near $x^*$ and $y^*$, with large gradients in
$f$, the conditions that drive $u$ and $v$ to zero.

%

\subsection{A spatial Lotka--Volterra model}
\label{sec:lv}

In our final example, we use the fitness gradient flux to construct a
spatial Lotka--Volterra model.  The non-spatial model has a
stable steady state, which we show here to be destabilized by cross-diffusion when the
fitness gradient flux is included.  Our approach will be discussed
more fully in a future paper.

Consider a generalized Lotka--Volterra ODE model
\begin{equation}
  \begin{aligned}
    \label{lv-ode}
    \dot{u} & = g_1(u,v) = u(c_1-c_2u-c_3v),\\
    \dot{v} & = g_2(u,v) = v(k_1+k_2u-k_3v),
  \end{aligned}
\end{equation}
where $u$ and $v$ are densities of the two species subject to logistic growth, 
and the constants $c_i,k_i> 0$.   The growth rate of $v$ is enhanced
by $u$, while the growth rate of $u$ is decreased by $v$, as might occur
in a predator-prey or host-parasite type interaction, where $u$ is the
prey and $v$ is the predator. We note however
that in the standard Lotka--Volterra predator-prey system, the constant $k_1$
would be strictly negative.   If the null-clines
$c_1=c_2u-c_3v$ and $k_1 = -k_2u+k_3v$ in the $(u,v)$-phase plane intersect in the interior of
the first quadrant, then \eqref{lv-ode} has a stable steady state
$(u^*,v^*)$, with $u^*,v^*>0$.

Linearizing \eqref{lv-ode} around the steady state and letting
\begin{equation*}
  J_{(u^*,v^*)} =
  \begin{bmatrix}
    \partial_u g_1 & \partial_v g_1\\ \partial_u g_2 & \partial_v g_2
  \end{bmatrix}_{(u^*,v^*)} =
  \begin{bmatrix}
    -k_2u^* & -k_3 u*\\ c_2v^* & -c_3v^*
  \end{bmatrix},
\end{equation*}
we see that $J$ has the following sign structure:
\begin{equation}
  \label{Jsign}
  \begin{bmatrix}
    - & - \\ + & -
  \end{bmatrix}.
\end{equation}
Since $\partial_u g_1$ and $\partial_v g_2$ have the same sign at the
steady state, the ODE system does not display an activator-inhibitor
dynamic, and  the steady state cannot be destabilized by
diffusion \cite{Murray2002}.  However, a cross-diffusive instability occurs when we
spatially extend this model as a fitness gradient flux system
\begin{equation}
  \label{eq:lv-pde}
  \begin{aligned}
    \partial_t u & = -\nabla \cdot (u \nabla f) + u(c_1 -c_2u -c_3v),\\
    \partial_t v & = -(1+ \gamma)\nabla \cdot (v\nabla g) + v(k_1 +k_2u-k_3v),
  \end{aligned}
\end{equation}
where the fitness $f$ is as defined in Section~\ref{sec:recast} above,
and satisfies the same conditions as assumed in our previous analysis. 
The instability is illustrated with a numerical example in
Figure~\ref{fig:LV}. Note that an individual in either the prey or
predator population
benefits by locating itself where there is a high density of prey
relative to predators.  Prey tend to aggregate, and predators
follow. The result is an alignment dynamic for the extrema as shown in
Figure \ref{fig:LV}, very similar to the fitness gradient flux system
discussed previously.  
Spatial variation of the ratio $u/v$ in the initial
conditions give rise to local aggregations, as the populations align
in a spatially structured steady state.  Unlike Turing patterns, however, 
there is no characteristic wavelength; 
steady state patterns depend on initial conditions.

\begin{figure}[t]     
  \centering
  \setlength{\figwidth}{.43\textwidth}
  \input{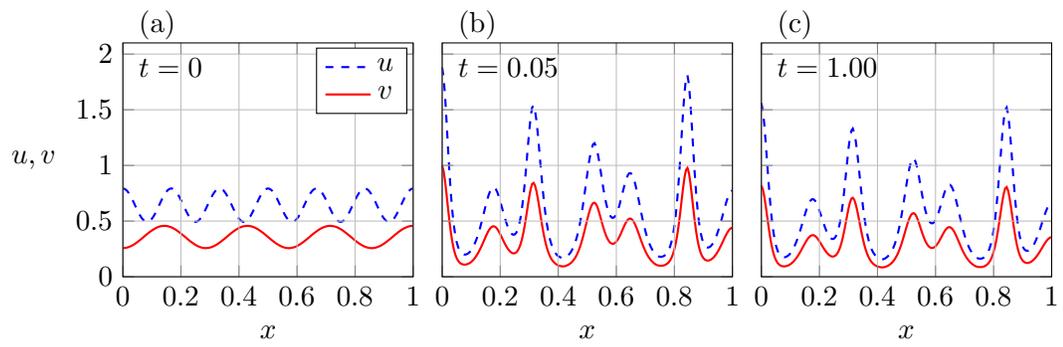}
  \caption{Spatially-extended Lotka--Volterra model \protect \eqref{eq:lv-pde}, 
  with $k_1=k_2=k_3=1, c_1=3/2, c_2=1, c_3=6$.}
  \label{fig:LV}
\end{figure}



\section{Conclusions}
\label{sec:conclusion}

Our results show that under a fitness-based dispersal mechanism where
the fitness has some dependence on individual interactions, as in an evolutionary game, variations in the ratio of population
densities lead to spatial structure as populations ascend local
fitness gradients.  

The interaction between populations in our model has a predator-prey or
cooperative-exploitative dynamic, as in the standard prisoner's dilemma and
hawk-dove games.  Individuals of both populations benefit
by locating themselves where the density of the cooperative or prey
species $u$ is large, relative to the density of the exploitative or predatory species
$v$.

We can consider interesting extensions of the model by coupling this fitness gradient
flux with ODE systems for relevant local population dynamics, as we
have done in the spatial Lotka--Volterra model in
Section~\ref{sec:lv}.  We also expect this spatial coupling to have
relevance to public-goods interactions that describe coexistence of
cooperative and exploitative behavior as has been observed, for example,
in polymorphic populations of yeast \cite{Greig2003}. 

Although here we have focused on directed motion in a non-diffusive
limit, it is natural to consider including a component of diffusion
and/or a law of motion for each population in the absence of the
other, as well as a density dependent fitness or term describing
interactions when $u+v$ is small and the mean-field assumptions of
evolutionary game theory should not be expected to hold. We also have not included in our basic model any term that
\emph{a priori} prevents unlimited aggregation.  The higher
sensitivity of population $v$,  to the
fitness gradient ($\gamma>0$ in
\eqref{eq:reduced-form-PDE}) allows the exploitative population to in some sense
overtake $u$ and limit its aggregation.  We expect that if the
cooperative population $u$ has the higher sensitivity that blow up would
occur, although this remains to be shown.

It is also  interesting to consider 
non-transitive (cyclic) games for three players, such as the classic Rock, Paper,
Scissors game.  Through numerical simulations,  we have shown the
development of spiral waves in 2D in a previous paper, and we suspect
that such models also have periodic solutions when coupled with particular
local population dynamics \cite{deforest2013}.



\noindent\textbf{Acknowledgements:} We would like to thank H. K. Jenssen and Y. Lou
 for helpful discussions, and C. Cosner and T. Reluga for comments. AB
 was supported by NSF Grant CMMI-1463482.
\raggedright

\end{document}